\documentclass[letterpaper,USenglish,cleveref,numberwithinsect,thm-restate]{no-lipics-v2022}
\usepackage[utf8]{inputenc}
\usepackage{color}
\usepackage{csquotes}
\usepackage{xspace}
\usepackage{amsmath}
\usepackage{amssymb}
\usepackage{mathtools}
\usepackage{microtype}
\usepackage{bbm}
\usepackage{bm}
\usepackage[linesnumbered, noend]{algorithm2e}
\usepackage{tikz}
\usetikzlibrary{arrows,arrows.meta,decorations.pathreplacing,decorations.pathmorphing,shapes,calc,patterns,shapes,matrix,math,quotes}
\usepackage{upgreek}

\nolinenumbers
\newcommand{\gap}{\bot}
\newcommand{\eps}{\varepsilon}

\newcommand{\ed}{\mathsf{ed}}
\def\dd{\mathinner{.\,.}}
\newcommand{\fragmentco}[2]{[#1\dd #2)}

\newcommand{\fragmentcc}[2]{[#1\dd #2]}
\newcommand{\range}[2]{\fragmentco{#1}{#2}}
\newcommand{\irange}[2]{\fragmentcc{#1}{#2}}
\newcommand{\gapcost}{g}
\newcommand{\mass}{\mu}
\DeclareMathOperator{\dist}{dist}

\newcommand{\Oh}{\ensuremath{\mathcal{O}}\xspace}
\newcommand{\tOh}{\ensuremath{\smash{\rlap{\raisebox{-0.2ex}{$\widetilde{\phantom{\Oh}}$}}\Oh}}\xspace}
\newcommand{\tO}{\tOh}

\newcommand{\ie}{i.e.\xspace}

\newcommand{\ceil}[1]{\lceil{#1}\rceil}
\newcommand{\set}[1]{\{#1\}}

\newcommand{\mycomment}[1]{}

\newtheorem{hypothesis}[theorem]{Hypothesis}
\crefname{hypothesis}{Hypothesis}{Hypotheses}

\title{Factor Three Approximation for Edit Distance}

\author{Egor Gorbachev}{ETH Zürich and Max Planck Institute for Informatics}{peltorator@pm.me}{https://orcid.org/0009-0005-5977-7986}{}

\authorrunning{E. Gorbachev}

\Copyright{Egor Gorbachev}

\acknowledgements{}

\begin{document}
\maketitle
\begin{abstract}
We give randomized algorithms for $3$-approximate edit distance in $\tO(N^{11/6})$ time for unweighted edit distance and in $\tO(N^{40/21})$ time for arbitrary metric edit weights, where $N$ is the total input length.\footnote{$\tO$ hides factors polynomial in $\log N$.}

For non-metric costs, we prove an unconditional $\Omega(N^2)$ oracle-query lower bound for every approximation factor depending only on $N$, even for symmetric weights or weights satisfying the triangle inequality (but not both).
Under the Orthogonal Vectors Hypothesis, we show a similar result for constant-size alphabets.
This holds even for symmetric weights over a size-$3$ alphabet or triangle-inequality weights over a size-$2$ alphabet.
In contrast, for symmetric weights over a binary alphabet we show an $\tO(N^{40/21})$-time $3$-approximation algorithm.
\end{abstract}

\section{Introduction}

The edit distance $\ed(X,Y)$ of two strings $X$ and $Y$ is the
smallest number of character insertions, deletions, and substitutions
needed to transform one string into the other.  For $X\in\Sigma^m$ and
$Y\in\Sigma^n$ over a finite alphabet $\Sigma$, we write
$N\coloneqq\max\{2,m+n\}$ and call $X$
the \emph{source string} and $Y$ the \emph{target string}.  The classical dynamic
program of Wagner and Fischer~\cite{WF} computes $\ed(X, Y)$ in
$\Oh(N^2)$ time by comparing all pairs of prefixes.  Subsequent work
obtained faster algorithms when the distance
$k\coloneqq\ed(X,Y)$ is small: Ukkonen~\cite{Ukk85} achieved time complexity $\Oh(Nk)$, while Landau and Vishkin~\cite{LV88}
gave an algorithm running in $\Oh(N+k^2)$ time.

Logarithmic improvements over the quadratic bound were also known,
starting with Masek and Paterson~\cite{MP80} for fixed alphabets.
Nevertheless, no \emph{truly subquadratic} exact algorithm, with
running time $\Oh(N^{2-\delta})$ for an absolute constant $\delta>0$,
is known for general inputs.
In 2015, Backurs and Indyk~\cite{BI15} gave a fine-grained
explanation for the lack of progress: such an algorithm would refute the Orthogonal Vectors Hypothesis (OVH) and thus also the Strong Exponential
Time Hypothesis (SETH).


As improvement for the exact computation is out of reach at the moment, it is natural to turn to developing approximation algorithms.
For example,
Andoni, Krauthgamer, and Onak~\cite{AKO10} obtained a
$(\log N)^{\Oh(1/\varepsilon)}$-approximation in $\Oh(N^{1+\varepsilon})$
time for each fixed $\varepsilon>0$.
The first constant-factor approximation in truly
subquadratic time was given by Chakraborty, Das, Goldenberg,
Kouck\'y, and Saks in 2018~\cite{CDGKS}: their original analysis
achieved factor 1680 in $\tO(N^{12/7})$ time.
Later in 2018, Andoni~\cite{Andoni18} improved the factor to
$3+\varepsilon$ in $\Oh_{\varepsilon}(N^{1.693})$ time.
Goldenberg, Rubinstein, and Saha~\cite{GRS21}, in their 2020 work,
obtained a $(3+\eps)$-approximation in $N^{1.6+o(1)}$ time.  In 2020, Andoni~\cite{Andoni20}
gave a simpler truly subquadratic $(3+\varepsilon)$-approximation algorithm.

Other results explore different parts of the tradeoff.
Andoni and Nosatzki~\cite{AN20} showed that, for every fixed
$\delta>0$, an $\Oh_{\delta}(1)$-factor approximation is possible in
$\Oh(N^{1+\delta})$ time.
More recently, Fox, Mader, Quanrud, Tavasoli, and Wang~\cite{FMQTW26}
obtained a $(3+\varepsilon)$-approximation in
$\tO_{\varepsilon}(N^{6/5}k^{2/5})$ time where $k$ is the edit distance $\ed(X, Y)$, improving
the time complexity for sublinear distances.
Mao and Rubinstein~\cite{MR26} obtained a $(1+\varepsilon)$-approximation
in $N^2/2^{\log^{\Omega(1)}N}$ time for every fixed $\varepsilon>0$.
This is a substantial saving, but it is still $N^{2-o(1)}$, rather than
truly subquadratic.  Thus, despite much progress in running time,
the general-input multiplicative guarantees in truly subquadratic
time stayed slightly above $3$ for nearly eight years.
We achieve factor three exactly, with no additive error.

\subparagraph{\boldmath Sketch of the $(3+\eps)$-approximation algorithms.}
The algorithms behind the $3+\eps$ bounds partition the source string $X$ into
consecutive substrings $U_i$, called \emph{source blocks}, of length at most $W = o(N)$ each.
The optimal (unknown) alignment induces a partition of $Y$ into target substrings $V_i$ such that the global alignment of $X$ onto $Y$ can be viewed as a concatenation of optimal alignments of $U_i$ onto $V_i$.
It remains to approximately compare each $U_i$ with all substrings of $Y$ and pick the appropriate $V_i$.
Computing all these comparisons directly would be too expensive.
Instead, we use the following routing idea:
We consider a small set $\mathcal C$ of intermediate strings we call \emph{centers} and compute distances from strings in $\mathcal{C}$ to all source blocks $U_i$ and all substrings of $Y$.
For each $U_i$, we pick a center $C_i$ minimizing $r_i \coloneqq \ed(U_i, C_i)$.
Then, whenever we need to approximate the distance $\ed(U_i, Y\fragmentco{p}{q})$ from a source block $U_i$ to some substring $Y\fragmentco{p}{q}$ we use triangle inequality of edit distance to claim that \[\ed(U_i, Y\fragmentco{p}{q}) \le r_i + \ed(C_i, Y\fragmentco{p}{q}) \le \ed(U_i, Y\fragmentco{p}{q}) + 2 r_i.\]
Therefore, if $\ed(U_i, Y\fragmentco{p}{q}) \ge r_i$, this routing leads to a $3$-approximation of $\ed(U_i, Y\fragmentco{p}{q})$.
A separate sample-based search supplies exact comparisons for
many of the inexpensive alignments with $\ed(U_i, V_i)<r_i$.
The search need not find every such comparison: it suffices to bound
the total cost of those it misses. This is where an excess above the factor three comes from.

\subparagraph{\boldmath How correcting a center saves the $\eps$ error.}
To achieve $3$-approximation, we counteract the $o(1)$ excess of sample-based search by marginally improving the triangle-inequality-based loss.
We move each selected center $C_i$ slightly closer to its
source block $U_i$.  Setting $C_i\coloneqq U_i$ would remove the routing
overhead entirely, but would also destroy the benefit of sharing
preprocessing.  It turns out that it is enough to undo \emph{one}
edit in an optimal alignment of $U_i$ onto $C_i$.
We choose the centers in such a way that $r_i \le |U_i| \le W$ holds.
The overhead can now be bounded by $2(r_i - 1) \le (2 - 2 / W) r_i$ instead of $2 r_i$.
We choose the remaining approximation errors small enough for this $\Theta(1 / W)$ saving to cover them.

The single corrected edit in the center $C_i$ is different for each source block $U_i$, but it does not
require separate preprocessing against the entire target string $Y$.
We use implicit DIST matrices~\cite{Schmidt98} to efficiently overcome this complication.

We are now ready to formulate our unit-cost approximation algorithm.
We say that an event holds \emph{with high probability} if, for any prescribed
constant $c>0$, its probability is at least $1-N^{-c}$ after choosing
the algorithm's constants appropriately.

\begin{restatable}[Unit costs]{theorem}{unitmain}\label{thm:unit}
There is a randomized algorithm that on every pair of input strings
$X\in\Sigma^m$ and $Y\in\Sigma^n$ runs in
$\tO((n + m)^{11/6})$ time and returns an alignment $\mathcal A$ of $X$ onto $Y$ of unweighted cost $\widehat D$ such that
$\widehat D\le3\ed(X,Y)$ holds with high probability.
\end{restatable}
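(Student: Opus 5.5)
We follow the block/center framework sketched in the introduction and show that the single-edit correction pays for every additive loss. Fix a block length $W=N^{\alpha}$ and split $X$ into source blocks $U_1,\dots,U_{m/W}$ of length at most $W$. An optimal alignment of $X$ onto $Y$ induces target fragments $V_i=Y\fragmentco{p_i}{q_i}$ with $\ed(X,Y)=\sum_i \ed(U_i,V_i)$. The final answer comes from a shortest-path dynamic program over a DAG. Its vertices are pairs (block boundary $i$, position $p$ in $Y$), and its edge weights are upper bounds $\widetilde d(U_i,Y\fragmentco{p}{q})\ge \ed(U_i,Y\fragmentco{p}{q})$. Each bound is witnessed by an actual alignment, so the cost $\widehat D$ of the returned alignment is always at least $\ed(X,Y)$. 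It then suffices to show that, with high probability, the path $(p_i,q_i)_i$ costs at most $3\sum_i \ed(U_i,V_i)$.

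The edge weights come from two sources, exactly as in the $(3+\eps)$ algorithms.

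\textbf{(a) Sample-based search.} We sample $\tO(N/W\cdot\text{polylog})$ candidate centers among the blocks and substrings of $Y$ at various distance scales $r=2^j$. For a block $U_i$ and a scale $r$, if many strings are within distance $r$ of $U_i$, some sampled center $C$ is within $r$ of $U_i$ with high probability. We then compute $\ed(C,\cdot)$ against all substrings of $Y$ using Landau--Vishkin / Ukkonen-style $\Oh(nk)$ computations, together with exact block-vs-window comparisons in the sparse regime. This yields exact edges for all cheap pairs except those whose total cost we can charge.

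\textbf{(b) Routing.} Otherwise, let $C_i$ be a chosen center with $r_i=\ed(U_i,C_i)\le |U_i|$. We fix one edit of an optimal alignment $U_i\to C_i$ and form the corrected center $C_i'$, which satisfies $\ed(U_i,C_i')=r_i-1$. We set $\widetilde d=(r_i-1)+\ed(C_i',Y\fragmentco{p}{q})$, which by the triangle inequality lies in $[\ed(U_i,\cdot),\ \ed(U_i,\cdot)+2(r_i-1)]$. The point is that $C_i'$ differs from $C_i$ in a single position, so we do not need fresh preprocessing against $Y$. We store the implicit DIST matrix of $C_i$ against $Y$, i.e.\ Schmidt/Tiskin seaweed matrices. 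We split $C_i'=C_i\fragmentco{0}{t}\,c\,C_i\fragmentco{t+1}{|C_i|}$, or the analogous form for an insertion or deletion. The distances $\ed(C_i',Y\fragmentco{p}{q})$ are then a min-plus combination of the prefix/suffix DIST matrices with a one-character middle term. Using unit-Monge range-query structures, each entry the DP needs is answered in $\tO(1)$. Alternatively, whole rows are computed in $\tO(n)$ by SMAWK on the Monge combination.

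The accounting goes as follows. For a block routed via (b) with $\ed(U_i,V_i)\ge r_i-1$, the edge costs at most $3\ed(U_i,V_i)$. The small sub-case $\ed(U_i,V_i)=r_i-1$ still gives a factor of $3$, since the routing overhead is $2(r_i-1)$. For blocks with $\ed(U_i,V_i)<r_i-1$ that search (a) misses, the $(3+\eps)$ analysis bounds the missed total by $\delta\sum_i \ed(U_i,V_i)$ plus a term absorbed by the slack. The hard step is making this absorption work. The correction saves exactly $2$ per routed block, i.e.\ at least $2r_i/W$ relative to factor $3$. We must tune the sampling thresholds so that every error of search (a) is at most $\Theta(1/W)$ relative to the charged cost, and so that blocks where the optimum is $0$ or tiny contribute no additive error. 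Blocks with $\ed(U_i,V_i)=0$ are handled exactly by finding exact occurrences via pattern matching, so they cost no error.

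Demanding $1/W$-precision in the sampling increases the sample counts by a $\mathrm{poly}(W)$ factor. We therefore balance three costs: the sampling density needed for error $1/W$, the preprocessing of centers against $Y$ at $\tO(n\cdot\text{scale})$ each, and the DP over $(N/W)\cdot N\cdot N$ transitions, reduced via DIST-matrix Monge structure to $\tO(N^2/W\cdot\ldots)$. Optimizing $\alpha$ should give $N^{11/6}$. The main obstacle is this error-versus-density tradeoff: proving that the uncharged cost of the sample search can be made $\le \frac{1}{W}\sum_i r_i$-comparable while staying truly subquadratic. I would first try a multi-scale charging in which each missed cheap pair is charged against the saving of $2$ from a routed block at the same scale.
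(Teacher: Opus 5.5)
\textbf{There is a genuine gap: the step the theorem rests on is left open, and the accounting you sketch for it cannot be completed.} You call the absorption of the sample-search loss by the one-edit correction ``the hard step'' and only suggest a charging scheme to try. So the proposal establishes neither the factor $3$ nor the $N^{11/6}$ bound, since no sparsity threshold, sample counts or parameters are fixed.

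\textbf{Why your per-block accounting fails.} You bound each routed block with $d_i\coloneqq\ed(U_i,V_i)\ge r_i-1$ separately by $3d_i$. You then want the excess of the missed blocks with $d_i<r_i-1$ to be at most $\delta d$ plus the correction savings. That is not what the $(3+\eps)$ analysis gives, and it is false in general. Take two adjacent blocks with $d_1=10$, $r_1=1$ and $d_2=1$, $r_2=9$, so $d=11$ on them. The region is not cheap relative to the radii ($d_1+d_2\ge r_1+r_2$), so the search is not guaranteed to find $V_2$. Routing block~2 costs $1+2\cdot 8=17$, an excess of $14$ over $3d_2$. This exceeds both $\delta d$ and the total correction saving of $4$. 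The excess can only be paid from the slack $3d_1-(d_1+2\widehat r_1)=20$ of block~1, which your per-block bound throws away. The charging must be global: you need $\sum_{i\in B}\widehat r_i\le d$ over all routed blocks $B$, not a factor-three bound per block.

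\textbf{How the paper makes the global charging work.}
\begin{itemize}
\item \emph{Sparsity.} Radii are at most the $k$-th smallest start-distance. Hence each block has at most $k$ starts $p$ with some $\ed(U_i,Y\fragmentco{p}{q})<r_i$, which keeps the search cheap.
\item \emph{Tree search.} The search runs on a balanced binary tree over the blocks, with $\rho_v=\sum_{i\in v}r_i$ and $D_v=\sum_{i\in v}d_i$. Each node samples $\tO(1/\eta)$ positive-radius blocks. It searches starts within $\pm\rho_u$ of the parent's anchors, using that the diagonal offset moves by at most the path cost.
\item \emph{Useful and lucky blocks.} A block is useful if $D_v<\rho_v$ on its whole root-to-leaf path. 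It is lucky if every node on that path samples a useful block. Lucky blocks receive their exact shortcut.
\item \emph{First partition.} The highest nodes with $D_v\ge\rho_v$ give $\sum_{i\notin I}r_i\le d$. This is where the slack of expensive blocks is spent.
\item \emph{Second partition.} The highest nodes sampling no useful block give $|I\cap B|\le\eta b_+$, where $b_+$ counts positive-radius unlucky blocks. Since $r_i\le|U_i|\le W$, the missed useful radii total at most $\eta W b_+$.
\end{itemize}

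\textbf{The idea you are missing.} The sampling loss scales with the \emph{number} $b_+$ of positive-radius routed blocks, not with $d$. With $\eta=1/W$ it is therefore cancelled exactly by the one-unit saving per such block: $\sum_{i\in B}\widehat r_i\le d+\eta W b_+-b_+=d$. Hence $\widehat D\le d+2\sum_{i\in B}\widehat r_i\le 3d$.

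\textbf{Running time.} The cost is $\tO(N^2W/k+NkW/\eta^2+N^2/W)$. Setting $W=N^{1/6}$, $k=N^{1/3}$, $\eta=1/W$ gives $N^{11/6}$. Note that $k$ must be polynomially larger than $W$, because each center is preprocessed against all of $Y$.

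\textbf{A smaller point.} The split position of $\widehat C_i$ differs from block to block, so one fixed prefix/suffix DIST pair per center is not enough. The paper preprocesses every node of a balanced tree over each sampled segment, so any substring is a concatenation of $\Oh(\log N)$ preprocessed pieces.
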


\subsection{Metric-Weighted Edit Distance}

Assigning every edit the same cost is often too restrictive.
Different typing errors or biological mutations have different
likelihoods, and sequence comparison uses costs that distinguish
them; see~\cite{CKW23} for specific examples. \emph{Weighted} edit distance permits a separate
cost for each edit.

Let $\gap\notin\Sigma$ be a gap symbol and put
$\overline\Sigma\coloneqq\Sigma\cup\{\gap\}$.
A \emph{weight function} is an arbitrary function $w \colon \overline\Sigma^2\to\mathbb R_{\ge0}$ whose value is zero if and only if the input characters are equal.  Deleting $a$ costs $w(a,\gap)$,
inserting $a$ costs $w(\gap,a)$, and substituting $a$ by $b$ costs
$w(a,b)$.  An \emph{alignment} is a pair of equal-length strings
$A$ and $B$ over $\overline\Sigma$ that become $X$ and $Y$, respectively, when
their gaps are removed, and for no index $j$ have $A[j] = B[j] = \gap$.  Define
\[
 \ed^w(X,Y)\coloneqq
 \min_{(A,B)\text{ an alignment of }X,Y}\ \sum_j w(A[j],B[j]).
\]
The general definition imposes no symmetry or triangle inequality.
However, an important special case is when $w$ is a \emph{metric}: for all
$a,b,c\in\overline\Sigma$,
\[
 w(a,b)=0\ \Longleftrightarrow\ a=b,\qquad
 w(a,b)=w(b,a),\qquad
 w(a,c)\le w(a,b)+w(b,c).
\]
Then $\ed^w$ is itself a metric on strings.  In particular, the triangle
inequality and symmetry again permit routing comparisons through centers.

Approximation for arbitrary metric costs long lagged behind the
unit-cost case.  Kuszmaul~\cite{Kusz19} gave an
$\Oh(N^\delta)$-approximation in $\tO(N^{2-\delta})$ time for
each fixed $\delta\in(0,1)$.  Very recently, two independent works obtained the
first truly subquadratic constant-factor guarantees for arbitrary
metrics.  Mader, Tavasoli, and Wang~\cite{MTW26} gave a
$(3+\varepsilon)$-approximation in
$\tO(N^{7/4}/\varepsilon^8)$ time, and Das, Kipouridis, and
Kociumaka~\cite{DKK26} gave one in
$\tO(N^{8/5}/\varepsilon^{16/5})$ time.

These algorithms retain the center-based strategy but supply tools
needed for arbitrary metric weights.
We pair these metric tools with our center-correcting idea.
Undoing a single edit as in the unit-cost case, however, does not suffice anymore.
Instead, we replace some substring of the center by a short substring of the source block $U_i$.
The same DIST-matrix machinery from the unit-cost case allows us to efficiently reuse preprocessing for the centers.
Balancing the cost of these corrections with the sample-based search bounds gives our second result.

\begin{restatable}[Arbitrary metric weights]{theorem}{weightedmain}\label{thm:weighted}
There is a randomized algorithm that, given strings
$X\in\Sigma^m$ and $Y\in\Sigma^n$ and constant-time oracle access
to a metric weight function $w \colon \overline\Sigma^2\to\mathbb R_{\ge0}$,
runs in $\tO((n + m)^{40/21})$ time
and returns an alignment $\mathcal A$ of $X$ onto $Y$ of weighted cost $\widehat D$ such that
$\widehat D\le3\ed^w(X,Y)$ holds with high probability.
\end{restatable}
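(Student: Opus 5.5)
We prove \cref{thm:weighted} by following the blueprint of the metric $(3+\eps)$-algorithms of~\cite{MTW26,DKK26} and adding a center-correction step analogous to the unit-cost case.

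\textbf{Step 1: blocks and centers.} Partition $X$ into source blocks $U_i$ of length at most $W$. Sample a set $\mathcal C$ of centers, taken as substrings of $X$. For each $U_i$ compute a center $C_i$ (approximately) minimizing $r_i\coloneqq\ed^w(U_i,C_i)$. Use the metric tools of those works to compute, for every $C\in\mathcal C$, a compressed (DIST-matrix / Monge) representation of the distances $\ed^w(C,Y\fragmentco{p}{q})$. Blocks that are ``cheap'', meaning some target substring is closer to them than $r_i$, are handled by the sample-based search. That search contributes an additive loss we can bound, say by $\delta\cdot\ed^w(X,Y)$ for a polynomially small $\delta$ that depends on the sampling parameters.

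\textbf{Step 2: correction.} Fix an optimal alignment of $U_i$ onto $C_i$, of weight $r_i$. Unlike the unit case, a single edit can carry negligible weight. So instead we cut the alignment into at most $L$ pieces of consecutive aligned columns with roughly equal weight. The heaviest piece has weight at least $r_i/L$. We replace the corresponding substring of $C_i$ by the matching substring of $U_i$, which has length at most $W/L$ up to rounding. This gives a corrected center $C_i'$ with $\ed^w(U_i,C_i')\le r_i(1-1/L)$ and $\ed^w(C_i,C_i')\le r_i$.

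The distances $\ed^w(C_i',Y\fragmentco{p}{q})$ are then obtained by composing three pieces through $(\min,+)$ products of DIST matrices:
\begin{itemize}
\item the precomputed DIST matrices for the prefix and suffix of $C_i$ surrounding the replaced part;
\item a freshly computed DIST matrix for the short inserted substring of $U_i$ against $Y$.
\end{itemize}
The fresh matrix costs about $\tO(N\cdot W/L)$ per block, restricted to relevant windows of $Y$. The products use Monge structure and SMAWK-style multiplication~\cite{Schmidt98}. Routing through $C_i'$ then gives
\[\ed^w(U_i,V)\le \ed^w(U_i,C_i')+\ed^w(C_i',V)\le \ed^w(U_i,V)+2r_i(1-1/L),\]
which is a factor $3-2/L$ whenever $\ed^w(U_i,V)\ge r_i$.

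\textbf{Step 3: accounting and balancing.} Globally, the chain DP over blocks combines the routed estimates with the exact comparisons from the sample-based search. This yields a cost at most $(3-2/L)\,\ed^w(X,Y)+\delta'\,\ed^w(X,Y)$, where $\delta'$ is the relative miss cost of sampling. One subtlety is that the misses must be bounded relative to $\ed^w$ rather than absolutely, including when $\ed^w(X,Y)$ is tiny. We handle the case $\ed^w(X,Y)=0$ separately by exact equality testing. We then choose parameters so that $\delta'\le 2/L$, which gives factor exactly $3$.

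The costs to balance are:
\begin{itemize}
\item center preprocessing, $|\mathcal C|\cdot N\cdot W$ up to logs;
\item correction, $(N/W)\cdot N W/L$;
\item sampling, whose cost grows polynomially in $1/\delta'$ following the $\eps$-dependence in~\cite{MTW26,DKK26}.
\end{itemize}
Optimizing $W$, $L$, $|\mathcal C|$ and the sampling rate should yield the exponent $40/21$.

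\textbf{Main obstacle.} The hardest step is showing that the sampling loss can be charged as a relative $\delta'$-fraction small enough to be absorbed by the $2/L$ saving. This must hold per block, or at least summed over blocks, and simultaneously with high probability. It requires reworking the sampling analysis of~\cite{DKK26} with explicit polynomial dependence on $\eps$, rather than treating $\eps$ as a constant. I would try this first by charging each missed cheap block's routed overhead against its own $r_i/L$ saving. A secondary difficulty is making the fresh DIST computation for the inserted substring cheap, by restricting it to diagonals within distance $\Oh(W)$ of the block's plausible location.
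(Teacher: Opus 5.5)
Your correction step matches the paper's (with $L=1/\tau$), except that $U_i$ should be cut into pieces of equal \emph{length}, not equal weight. Averaging still gives the heaviest piece at least a $1/L$ share of the cost, while equal-weight pieces give no bound on the length of the copied substring.

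The genuine gap is the step you call the main obstacle. It carries the whole factor-$3$ argument, you leave it open, and the route you suggest fails. A missed cheap block has $d_i<r_i$, possibly $d_i=0$, so routing it overpays by up to $2(1-1/L)r_i$, while its own saving is only $2r_i/L$. The charging must be global. The paper takes from~\cite{DKK26} (\cref{lem:weighted-sparse-coverage}) the bound $\sum_{i\in B(P)}r_i\le(1+\Oh(\eta))\operatorname{cost}(P)$ over \emph{all} routed blocks, cheap-but-missed or not. The unit-cost section shows the mechanism. Non-useful blocks have total radius at most their cost: sum $D_v\ge\rho_v$ over the highest such nodes. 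Missed useful blocks are at most an $\eta$-fraction of the positive-radius unlucky blocks, and the saving summed over \emph{every} unlucky block pays for them. Together with $(1+\eta)$-accurate cheap labels, this gives a path of cost $(3-2\tau+\Oh(\eta))$ times the path cost, and $\eta=c_0\tau$ finishes. The same lemma also supplies the explicit search cost $\tO(RkW^2/\eta^4+N_sW/\eta+RM)$, without which no exponent can be derived.

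That running time also requires every block to have at most $k$ target candidates closer than $r_i$. The paper obtains this by sampling centers \emph{among target candidates} and using the rank argument. Centers taken from $X$ give no such sparsity. For example, let $Y=\mathtt a^n$ and let each block be $\mathtt a^W$ with $r<W/2$ positions replaced by distinct fresh letters. Source-side centers give $r_i\approx 2r$ for most blocks, yet every start of $Y$ lies within $r<r_i$, so the search and the leaf shortcuts become quadratic.

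Target-side centers bring in what your outline omits. A good weighted center can be arbitrarily long, so the paper uses representatives of length $\Oh(W/\eta)$ with $(1+\eta)$ distortion (\cref{lem:weighted-representatives}). It sets $r_i=\ed^w(U_i,C_i)/(1+\eta)$ so that the rank argument survives. It also restricts to the $M=\tO(N/(\eta^2W))$ candidates of \cref{lem:weighted-decomposition}. Their additive $\Oh(\eta D)$ loss is why a $4$-approximation $D$ is computed first; handling $d=0$ separately does not suffice.

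Only with these explicit quantities does the balancing work: $W=N^{1/7}$, $k=N^{8/21}$, $\tau=N^{-2/21}$ make every per-rectangle term $\tO(N_sN^{19/21})$. Finally, the prefix and suffix of $C_i$ vary with $i$, so assemble them from $\Oh(\log N)$ nodes of a binary tree over each center. Push the boundary \emph{vector} through each node with SMAWK and handle the copied columns by plain DP. Full $(\min,+)$ products of DIST matrices would be quadratic per block.
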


\subsection{What Fails if the Weight Function Is Not a Metric?}
\label{sec:intro-lowerbounds}

It is natural to ask why the above theorem is restricted to the case of \emph{metric} weight functions.
Can these approximation results extend to arbitrary weight functions?
Das, Kipouridis, and Kociumaka~\cite{DKK26} explicitly leave the nonmetric case unresolved.
We give two simple obstructions for arbitrary weight functions.
Each persists when we drop either symmetry or the triangle inequality.

\subparagraph{Oracle-based lower bound for large alphabets.}
If the alphabet has size $\Theta(N)$, its weight table has
$\Theta(N^2)$ entries.  Table size alone is not a lower bound:
an algorithm might not need to inspect most entries.
However, a cleverly designed instance may potentially force the algorithm to read a constant fraction of the weight table.
Huang~\cite{Huang88} used this idea to prove an unconditional quadratic lower bound for \emph{exact} weighted edit distance computation.
We present a modification of this result that rules out any approximation of weighted edit distance in $o(N^2)$ time.

Our strings make every one of quadratically many symbol pairs potentially
decisive.  In a baseline instance every alignment has cost at least
$B$ where $B$ is a parameter we can choose freely.
Changing any one of those pair costs creates an alignment of
cost $\Oh(N)$.
By choosing $B$ sufficiently large, any proposed
finite approximation factor must distinguish the baseline from
each alternative, forcing quadratically many queries.

\begin{restatable}[Oracle lower bound]{theorem}{nonmetricoraclemain}
\label{thm:nonmetric-oracle}
Let $\alpha \colon \mathbb Z_{> 0}\to\mathbb Z_{> 0}$ be an arbitrary
function. Every randomized algorithm that, given two strings $X$ and $Y$ and oracle access to an arbitrary (potentially non-metric) weight function $w \colon \overline\Sigma^2\to\mathbb R_{\ge0}$ returns a value $\widehat D$ satisfying
\[
 \frac{\ed^w(X,Y)}{\alpha(N)}\le\widehat D\le\alpha(N)\ed^w(X,Y)
\]
with probability at least $2/3$ requires $\Omega(N^2)$ oracle queries in
expectation on inputs of arbitrarily large total length $N$.
This holds even when all weights in $w$ belong to $\{0, 1,B\}$ for some positive integer $B$, and under either of two separate
restrictions: $w$ is symmetric, or $w$ satisfies the triangle
inequality, \ie, $w$ is a quasimetric.  (The two restrictions are
alternatives, not simultaneous assumptions.)
\end{restatable}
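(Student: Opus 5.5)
The plan is a reduction from the problem of finding a single marked entry among $\Theta(N^2)$ unmarked ones, for which every randomized algorithm needs $\Omega(N^2)$ oracle queries in expectation. I will use Yao's minimax principle against a distribution that puts probability $1/2$ on a \emph{baseline} weight function $w_0$ and spreads the remaining $1/2$ uniformly over $\Theta(n^2)$ \emph{alternatives} $w_{ij}$. Each $w_{ij}$ agrees with $w_0$ except on the single table entry $w(a_i,b_j)$, which drops from $B$ to $1$. Everything below is in terms of the per-side length $n$; the strings will have total length $N=\Theta(n)$ over an alphabet of size $\Theta(n)$, so $\Theta(n^2)=\Theta(N^2)$.

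The construction has to achieve two things at once. First, $\ed^{w_0}(X,Y)\ge B$. Second, $\ed^{w_{ij}}(X,Y)=\Oh(N)$ for every admissible pair $(i,j)$. I will take distinct symbols $a_1,\dots,a_n$ for the source and $b_1,\dots,b_n$ for the target, plus a constant number of extra marker symbols, and set $X=a_1\cdots a_n$ and $Y=b_1\cdots b_n$ surrounded by these markers. All weights will lie in $\{0,1,B\}$. Deleting any $a_i$ and inserting any $b_j$ will cost $1$. The markers will be given cost-$B$ gaps and cost-$B$ substitutions against each other, so that every alignment of cost below $B$ is forced to route through exactly one substitution $a_i\mapsto b_j$. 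The remaining characters are then deleted or inserted, for $\Oh(n)$ total cost. The real difficulty is that this routing must be achievable for every pair $(i,j)$ in a quadratic set, for instance all pairs or all $i\le j$, while the all-gap alignment is still blocked.

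Next I check the two variants separately.
\begin{itemize}
\item \emph{Symmetric weights.} I set $w(x,y)=w(y,x)$ everywhere, which is harmless because source and target symbols are disjoint. The triangle inequality is not required here, so the marker constraints can be enforced freely by cost-$B$ entries.
\item \emph{Quasimetric weights.} Asymmetry is allowed, but I must check $w(x,z)\le w(x,y)+w(y,z)$. In particular, lowering $w(a_i,b_j)$ to $1$ cannot violate it, since the other entries are at most $B$ and all gaps on the $a$/$b$ symbols cost $1$. The potential problem lies in the marker entries of cost $B$ that are sandwiched between cost-$1$ entries. I expect to handle these by using asymmetry: make marker-to-$\gap$ expensive while $\gap$-to-marker is cheap, or the reverse, as needed. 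Triangle inequality then holds with $1+1\ge$ the relevant entry wherever a cheap two-step path exists, and no cheap two-step path exists where I need an entry equal to $B$.
\end{itemize}

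Given the construction, the lower bound follows. I choose $B>\alpha(N)^2\cdot cN$, where $cN$ bounds the alternative costs. Then any valid $\widehat D$ under $w_0$ is at least $B/\alpha(N)>\alpha(N)\,cN$, while under any $w_{ij}$ it is at most $\alpha(N)\,cN$. So the output distinguishes the baseline from every alternative. A deterministic algorithm that makes fewer than $n^2/10$ queries fails to query the marked entry on most alternatives. On those runs it behaves exactly as it does on $w_0$, so it errs with probability above $1/3$ under the hard distribution, and Yao's principle transfers this to randomized algorithms in expectation.

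I expect the main obstacle to be the gadget itself: forcing exactly one cross substitution while allowing any one of quadratically many pairs to serve as that substitution, and at the same time keeping the all-deletion alignment at cost at least $B$. I have not yet settled the marker placement. I would first try interleaving $X$ and $Y$ with cost-$B$ separators whose only cheap resolution crosses them at one point, and then verify the triangle inequality for that construction in the quasimetric case.
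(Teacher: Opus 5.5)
\textbf{There is a genuine gap.} Your reduction skeleton matches the paper's: a baseline $w_0$ with $\ed^{w_0}(X,Y)\ge B$, alternatives $w_{ij}$ that lower one pair to cost $1$ and have $\ed^{w_{ij}}(X,Y)=\Oh(N)$, and $B$ exceeding $\alpha(N)^2$ times that bound. But the gadget, which is the real content of the proof, is left open. Worse, the constraints you commit to make it impossible.

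You set $w_0(a_i,\gap)=w_0(\gap,b_j)=1$. Take any alignment that is cheap under $w_{ij}$ and replace its column $(a_i,b_j)$, if present, by the two columns $(a_i,\gap)$ and $(\gap,b_j)$. Every other column costs the same under $w_0$, so $\ed^{w_0}(X,Y)\le\ed^{w_{ij}}(X,Y)+1=\Oh(N)$, whatever markers you add. In the quasimetric variant the plan breaks even earlier. The triangle inequality through $\gap$ forces $w_0(a_i,b_j)\le w_0(a_i,\gap)+w_0(\gap,b_j)=2<B$. So your check that lowering $w(a_i,b_j)$ is harmless overlooks that the baseline itself is not a quasimetric.

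\textbf{The missing idea.} One of the two gap costs must be $B$, and the symbols on that side need another cheap sink. The paper sets $w_0(\gap,b_j)=B$ and adds a source filler $\mathtt{\%}$ whose edits all cost $1$. It takes $X=\mathtt{\%}^{2h}\mathtt{\$}a_0\cdots a_{h-1}\mathtt{\$}\mathtt{\%}^{2h}$ and $Y=\mathtt{\$}b_0\mathtt{\$}b_1\cdots\mathtt{\$}b_{h-1}\mathtt{\$}$, with $w_0(a_i,\gap)=1$ and every other unequal pair costing $B$.
\begin{itemize}
\item Below cost $B$, both source dollars must match target dollars, since $Y$ has no $\mathtt{\%}$. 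These target dollars enclose some $b$, whose partner is an $a$ or a gap, and both options cost $B$.
\item Under $w_{ij}$, match the source dollars to those around $b_j$, pair $a_i$ with $b_j$, and delete the other $a$'s. The percent blocks absorb the rest of $Y$, for total cost $5h$.
\item For the quasimetric case, use directed weights: $w(\mathtt{\%},z)=1$ for $z\ne\mathtt{\%}$, $w(a_i,\gap)=1$ (and $w(a_i,b_j)=1$ in the alternative), and $B$ elsewhere. These satisfy the triangle inequality, because every two-step path of cost $2$ starts at $\mathtt{\%}$, which reaches every symbol directly at cost $1$.
\end{itemize}

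\textbf{A smaller point.} Your Yao step bounds algorithms that \emph{always} make few queries. Converting that to an expected-query bound, by truncation or by the Monte Carlo form of Yao's principle, adds error, and at error $1/3$ this is too much without first amplifying. The paper avoids this with a direct coupling: it runs the algorithm on $w_0$ and $w_{ij}$ with shared randomness. The two runs coincide unless the baseline run queries the changed entry, so that query happens with probability at least $1/3$. Summing over the $h^2$ pairs gives the bound.
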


\subparagraph{Hardness for a constant-size alphabet.}
One may object that the above lower bound is artificial because it somehow forces the true input size to be quadratic.
To address this criticism, we give a lower bound for an alphabet of constant size.
In this case we do not have tools for developing unconditional lower bounds so we resort to a fine-grained conditional one.
Specifically, we base our lower bound on the Orthogonal Vectors
Hypothesis (OVH)~\cite{KW19}, the same hypothesis underlying the
fine-grained lower bound for exact edit distance~\cite{BI15}.
OVH asserts that for every
$\varepsilon>0$ there is a constant $c>0$ such that no
$\Oh(n^{2-\varepsilon})$-time algorithm can decide whether two lists
of $n$ vectors in $\{0,1\}^{\lceil c\log_2 n\rceil}$ contain a pair
whose inner product is zero.  Its randomized version excludes
algorithms succeeding with probability at least $2/3$ on every
input.

\begin{restatable}[Three-symbol symmetric hardness]{theorem}{nonmetricfixedmain}
\label{thm:nonmetric-fixed}
Let $\alpha \colon \mathbb Z_{>0}\to\mathbb Z_{>0}$ be an arbitrary
function computable in $N^{1+o(1)}$ time.  For every fixed $\delta>0$,
assuming the (randomized) Orthogonal Vectors Hypothesis, there is no
(randomized) algorithm with worst-case running time $\Oh(N^{2-\delta})$
that, given strings $X$ and $Y$ of total length $N$ and oracle access to a
symmetric weight function $w \colon \overline\Sigma^2\to\mathbb R_{\ge0}$,
returns a value $\widehat D$ satisfying
\[
 \frac{\ed^w(X,Y)}{\alpha(N)}\le\widehat D\le\alpha(N)\ed^w(X,Y)
\]
(with probability at least $2/3$).
This holds even when $|\Sigma|=3$ and all weights belong to
$\{0,1,B\}$ for some positive integer $B$.
\end{restatable}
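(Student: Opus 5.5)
We reduce from Orthogonal Vectors. Given $A=\{a_1,\dots,a_n\}$ and $B'=\{b_1,\dots,b_n\}$ in $\{0,1\}^d$ with $d=\ceil{c\log_2 n}$, we build in $\Oh(nd)$ time strings $X,Y$ over $\Sigma=\{0,1,2\}$ of total length $N=\Theta(nd)$ and a symmetric weight function $w$ with values in $\{0,1,B\}$. The goal is a gap: $\ed^w(X,Y)\le L$ if an orthogonal pair exists, and $\ed^w(X,Y)\ge B$ otherwise, where $L=\Oh(N)$. We then take $B> \alpha(N)^2 L$, which is computable in $N^{1+o(1)}$ time. An $\alpha(N)$-approximation returns $\widehat D\le \alpha(N) L$ in the first case and $\widehat D\ge B/\alpha(N)>\alpha(N)L$ in the second. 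Thresholding therefore decides OV in $\Oh(N^{2-\delta})=\Oh(n^{2-\delta}\mathrm{polylog}\, n)$ time, contradicting (randomized) OVH. The randomized version carries over directly because the reduction is deterministic.

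The weights follow the idea of the oracle lower bound (Theorem~\ref{thm:nonmetric-oracle}): make most operations prohibitively expensive ($B$) and only a few cheap ($1$), so the cheap alignments have rigid structure. Symmetry is the only constraint we respect; we deliberately violate the triangle inequality. A natural choice is to use symbol $2$ as a separator. We set $w(0,0)=w(1,1)=w(2,2)=0$, $w(0,1)=w(1,0)=B$, and $w(2,\gap)=w(\gap,2)=1$. We also make $w(1,\gap)=w(\gap,1)=B$ and $w(0,\gap)=w(\gap,0)=1$. Then substitution between coordinates is forbidden, and deleting a $1$ is expensive. Symmetry forces insertion costs to equal deletion costs, so insertions and deletions cannot be treated asymmetrically. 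This is exactly why the construction needs the third symbol.

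The coordinate gadget must encode that $a[k]\cdot b[k]=0$ is cheap and $a[k]=b[k]=1$ is expensive. We encode $a[k]$ as a pair of symbols in $X$ and $b[k]$ as a pair in $Y$ so that the three allowed pairs $(0,0),(0,1),(1,0)$ admit a zero-cost or unit-cost matching, while $(1,1)$ forces a $B$-cost operation. For instance, with $1$ mapped to $10$ and $0$ mapped to $00$ on the source side, and $1$ mapped to $01$ and $0$ mapped to $00$ on the target side, we still need to check which pairings avoid a $1$ meeting a $0$ or a gap. We will likely use $2$ inside gadgets, using $w(2,0)$ or $w(2,1)$ cheap, to absorb the unmatched $1$. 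A vector gadget is the concatenation of its coordinate gadgets, framed by separators $2$. The outer construction follows Backurs–Indyk~\cite{BI15}. $X$ contains all $a_i$ gadgets. $Y$ contains two copies of the $b_j$ list, padded with \enquote{universal} dummy gadgets that match any $a$-gadget at cost $\Oh(d)$, so that every $a$-gadget can be aligned to some $b$-gadget or dummy. An alignment of cost $\Oh(N)$ exists if and only if some $a_i$ aligns to a true $b_j$ with zero expensive operations.

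The main obstacle is the soundness direction: showing that every alignment of cost less than $B$ corresponds to an orthogonal pair. This is harder than in the exact-distance reduction, because we do not compare costs of near-optimal alignments. Instead we must argue that any alignment avoiding all $B$-weight pairs is rigid. Separators must be matched to separators or cheaply deleted, and each coordinate gadget must align with a coordinate gadget in the same position. Without the triangle inequality, chains of cheap operations could simulate a forbidden substitution, for example deleting $2$ and inserting $2$ around a $0$. So we will enumerate all pairs of weight $<B$ and prove a local lemma: in any cheap alignment, each $1$ of $X$ is matched to a $1$ of $Y$ or to a $2$, and the gadget lengths and separator counts force block-to-block alignment. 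We will tune the gadget padding, for example with runs of $0$s, until this invariant holds. We will also have to ensure the dummy gadgets cannot absorb an $a$-gadget at zero expensive cost unless the pairing was intended.
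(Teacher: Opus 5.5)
\textbf{There is a genuine gap: the reduction itself is missing.} Your closing step is fine and is how the paper finishes: take $B>\alpha(N)^2L$, threshold $\widehat D$, and use $\varepsilon=\delta/2$ in the running-time accounting. But you never fix the gadgets. The soundness direction, which you rightly call the whole difficulty, is deferred (``we will tune the gadget padding until this invariant holds'').

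\textbf{Your concrete choices work against soundness.}
\begin{itemize}
\item With $w(0,\gap)=1$, the source gadget $10$ aligns with the target gadget $01$ at cost $2$ without any $B$-column: insert a $0$, match the two $1$s, delete a $0$. So the case $a[k]=b[k]=1$ is not penalized.
\item Meanwhile a source $10$ facing a target $00$ has no $B$-free partner for its $1$ inside the gadget, since that $1$ can be neither deleted nor substituted by $0$. The polarity is therefore reversed.
\item More generally, cheap $0$-indels let $1$s slide across coordinate boundaries, so there is no positional rigidity to build on.
\end{itemize}

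\textbf{The Backurs--Indyk frame does not fit this setting.} Here you need the dichotomy ``some alignment avoids every $B$-column iff an orthogonal pair exists.'' For completeness, every $a$-gadget other than $a_i$ and every $b$-gadget other than $b_j$ must be passable without a $B$-column, because they may face non-orthogonal partners. Nothing in your sketch then stops a no-instance from routing every $a$-gadget through dummies $B$-free. In Backurs--Indyk, alignments are pinned to a contiguous window only by cost accounting, which yields an additive gap, not a $B$-versus-$O(N)$ gap. A contiguous window would also put $a_{i\pm1}$ against real, possibly non-orthogonal, $b_{j\pm1}$.

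\textbf{How the paper avoids gadget design.} It reduces from subsequence matching with bounded gaps, which Day, Kosche, Manea, and Schmid show OV-hard with all $\ell_i\le 84$; that problem already has the required existential structure.
\begin{itemize}
\item The third symbol $\star$ is a wildcard: $w(\star,z)=w(z,\star)=1$ for $z\in\{0,1,\gap\}$.
\item Every other unequal pair costs $B$, including $w(0,\gap)$, $w(1,\gap)$, and $w(0,1)$. So the bits are completely rigid and all flexibility lives in $\star$.
\item The strings are $X=\star^tP[0]\star^{\ell_0}P[1]\cdots\star^{\ell_{k-2}}P[k-1]\star^t$ and $Y=T$.
\end{itemize}
A $B$-free alignment has no target insertions and matches each pattern bit to an identical text bit. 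The $\ell_i$ stars between $P[i]$ and $P[i+1]$ can absorb at most $\ell_i$ text bits. Hence $B$-free alignments correspond exactly to valid occurrences, and a yes-instance costs at most $|X|$.
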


Interestingly, the alphabet size in this lower bound is tight.  Over a binary
alphabet, arbitrary symmetric weights reduce exactly to metric weights
plus a nonnegative correction computed from character counts.
Consequently, our metric approximation also handles this case.

\begin{restatable}[Symmetric weights for a binary alphabet]{theorem}{binarysymmetricmain}
\label{thm:binary-symmetric}
There is a randomized algorithm that, given strings $X\in\{0, 1\}^m$ and $Y\in\{0,1\}^n$
and oracle access to an arbitrary symmetric weight function
$w \colon \{0,1,\gap\}^2\to\mathbb R_{\ge0}$, runs in
$\tO((n+m)^{40/21})$ time and returns an alignment $\mathcal A$ of
$X$ onto $Y$ of cost $\widehat D$ such that $\widehat D\le3\ed^w(X,Y)$
holds with high probability.  No triangle inequality is required.
\end{restatable}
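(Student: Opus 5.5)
The plan is to reduce to \cref{thm:weighted}. For a binary alphabet, a symmetric $w$ is determined by three positive numbers: $a\coloneqq w(0,\gap)$, $b\coloneqq w(1,\gap)$ and $s\coloneqq w(0,1)$. Symmetry holds automatically, so only the triangle inequality can fail. Since every value is positive, its only nontrivial instances are the three conditions $s\le a+b$, $a\le b+s$ and $b\le a+s$. The goal is a metric weight $w'$ on $\{0,1,\gap\}$ and a quantity $c=c(X,Y)\ge 0$, depending only on character counts, such that for \emph{every} alignment $\mathcal A$ of $X$ onto $Y$,
\[
\mathrm{cost}_w(\mathcal A)=\mathrm{cost}_{w'}(\mathcal A)+c .
\]
Alternatively, it suffices that the left side is at most the right side and that the two sides agree on some $w$-optimal alignment. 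Given this, run the algorithm of \cref{thm:weighted} with $w'$; its oracle is simulated in $\Oh(1)$ time. It returns $\mathcal A$ with $\mathrm{cost}_{w'}(\mathcal A)\le 3\ed^{w'}(X,Y)$. Then
\[
\mathrm{cost}_w(\mathcal A)\le 3\ed^{w'}(X,Y)+c\le 3\bigl(\ed^{w'}(X,Y)+c\bigr)=3\ed^{w}(X,Y),
\]
where the middle step uses $c\ge 0$. This nonnegativity is the reason the statement needs the correction to be nonnegative.

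To find the transformation, I would write the cost of an alignment in terms of column counts. Let $d_0,d_1$ be the deletions, $i_0,i_1$ the insertions, and $s_{01},s_{10}$ the substitutions of each type. Then the cost is $a(d_0+i_0)+b(d_1+i_1)+s(s_{01}+s_{10})$. The invariants available are the character counts. Counting non-gap columns $k$ gives $d_0+d_1+s_{01}+s_{10}+(\text{matches})=m$, and similarly for $n$. Per symbol, $d_0-i_0+s_{01}-s_{10}=\#_0(X)-\#_0(Y)$.

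Now treat the failure cases. If $s>a+b$, set $w'(0,1)\coloneqq a+b$ and keep the rest; take $c=0$. A substitution of cost $s$ can be replaced by a deletion and an insertion of total cost $a+b<s$. Hence some $w$-optimal alignment uses no substitutions, and on that alignment $w$ and $w'$ agree. Since $w'\le w$ pointwise, the inequality form above holds. If instead, say, $a>b+s$, I would lower the $0$-indel weight to $a'\coloneqq b+s$ and compensate through the invariants. The idea is to reinterpret a $0$-indel as costing $a'$ plus a surplus $a-a'$, and to show that the number of $0$-indels charged with this surplus is forced by the counts. Concretely, I would try to show that some optimal alignment, for both $w$ and $w'$, has exactly $|\#_0(X)-\#_0(Y)|$ such indels, with all other $0$-mismatches absorbed into substitutions. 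That would give $c=(a-a')\cdot|\#_0(X)-\#_0(Y)|\ge 0$. This choice also keeps the $w'$-costs of alignments comparable, by the inequality-plus-equality-on-an-optimum argument. After this adjustment I would check the remaining triangle inequalities for $w'$; at most one of $a>b+s$ and $b>a+s$ can hold, and the case $s>a+b$ is disjoint from both.

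The main obstacle is this second case. The inequality $\mathrm{cost}_w\le\mathrm{cost}_{w'}+c$ must hold for every alignment, in particular the one returned by the algorithm, and not merely for an optimal one. If $a>b+s$, then a $0$-deletion cannot be locally simulated within an alignment, so the surplus genuinely depends on how many $0$-indels the returned alignment uses. I would handle this by a post-processing step applied to the returned $\mathcal A$, leaving its $w'$-cost unchanged. The step pairs excess $0$-indels with opposite $1$-indels, or with substitutions, and rewrites them, for instance turning a $0$-deletion plus a $1$-insertion into a substitution plus a cheap rearrangement. The aim is that the number of $0$-indels becomes exactly $|\#_0(X)-\#_0(Y)|$ and the identity holds with equality. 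If such a rewriting is not available, the fallback is an exchange argument: among $w'$-near-optimal alignments, the excess $0$-indels can be reduced at no $w'$-cost increase.
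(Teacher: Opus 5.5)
Your architecture matches the paper's: replace $w$ by its shortest-path metric $w'$, run \cref{thm:weighted}, post-process the returned alignment, and pay a nonnegative correction that depends only on counts. The gap is in the case $a>b+s$, which is where the real work lies. Your correction $c=(a-a')\,|\#_0(X)-\#_0(Y)|$ is wrong, because a surplus zero need not sit in a gap column: it can be substituted against a one. A zero is forced into a gap column only when the other string runs out of positions altogether. So the number of zero-gap columns that every alignment must contain is
\[
 E_0=\max\{0,\ \#_0(X)-|Y|,\ \#_0(Y)-|X|\},
\]
which can be far smaller than $|\#_0(X)-\#_0(Y)|$. Already for $X=0$, $Y=1$ you get $\ed^w(X,Y)=\ed^{w'}(X,Y)=s$ but $c=a-b-s>0$. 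For $X=0^n$, $Y=1^n$ your $c$ is $n(a-b-s)$, while the true correction is $0$. Hence $\ed^w=\ed^{w'}+c$ fails, the last equality of your chain fails, and the bound $3\ed^{w'}+c$ can exceed $3\ed^w$ arbitrarily. Your post-processing target is also wrong. Forcing exactly $|\#_0(X)-\#_0(Y)|$ zero-indels on $0^n$ versus $1^n$ means deleting and inserting everything, at $w'$-cost $n(2b+s)$ instead of $ns$.

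The second missing piece is the argument that actually reaches $E_0$ for an arbitrary returned alignment, not just an optimal one. Your local rewrite of a $0$-deletion plus a $1$-insertion into a substitution need not preserve monotonicity, since matched columns may lie between the two gaps. The paper views an alignment as the selected (non-gap) positions of each string, paired in order.
\begin{itemize}
\item If a source zero is unselected while a source one is selected, swap the zero with the nearest selected one. Only one symbol of the selected subsequence changes, from $1$ to $0$. Pairing costs therefore rise by at most $s$, while gap costs fall by exactly $a'-b=s$.
\item After doing this on both sides, any side with an unselected zero has only zeros selected. Adding such a zero together with any unselected position of the other string, and re-pairing in order, leaves every old pair cost unchanged. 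It lowers the $w'$-cost by $2a'$ or $2b$.
\end{itemize}
This procedure never increases $w'$-cost and ends with exactly $E_0$ zero-gap columns. Combined with the universal lower bound $E_0$, it yields both $\ed^w=\ed^{w'}+(a-a')E_0$ and the conversion $\mathrm{cost}_w(\mathcal A')\le\mathrm{cost}_{w'}(\mathcal A)+(a-a')E_0$.

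Finally, in the case $s>a+b$ your inequality runs the wrong way. Since $w'\le w$ pointwise, you get $\mathrm{cost}_w(\mathcal A)\ge\mathrm{cost}_{w'}(\mathcal A)$. You must therefore also split every substitution of the returned alignment into a deletion and an insertion before outputting it.
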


Without symmetry, hardness already occurs over a binary alphabet,
even when the weights satisfy the triangle inequality.

\begin{restatable}[Binary alphabet quasimetric hardness]{theorem}{binaryquasimetrichardmain}
\label{thm:binary-quasimetric-hardness}
Let $\alpha \colon \mathbb Z_{>0}\to\mathbb Z_{>0}$ be an arbitrary
function computable in $N^{1+o(1)}$ time.  For every fixed $\delta>0$,
assuming the (randomized) Orthogonal Vectors Hypothesis, there is no
(randomized) algorithm with worst-case running time $\Oh(N^{2-\delta})$
that, given strings $X,Y\in\{0,1\}^*$ of total length $N$ and oracle
access to a weight function $w \colon \{0,1,\gap\}^2\to\mathbb R_{\ge0}$
satisfying the triangle inequality, returns a value $\widehat D$ satisfying
\[
 \frac{\ed^w(X,Y)}{\alpha(N)}\le\widehat D\le\alpha(N)\ed^w(X,Y)
\]
(with probability at least $2/3$).
This holds even when all weights belong to $\{0,1,B\}$ for some
positive integer $B$.
\end{restatable}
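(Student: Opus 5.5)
We reduce from Orthogonal Vectors using an asymmetric weight function that makes insertions prohibitively expensive and deletions cheap. This turns approximate weighted edit distance into a constrained embedding problem for $Y$ inside $X$. Concretely, over $\overline\Sigma=\{0,1,\gap\}$ we set
\[
w(0,\gap)=w(1,\gap)=1,\qquad w(\gap,0)=w(\gap,1)=B,\qquad w(0,1)=1,\qquad w(1,0)=B.
\]
We first check the triangle inequality, which is a finite case check over the nine ordered pairs. Every inequality with a $B$ on the left has a $B$ on the right: for instance $w(\gap,0)\le w(\gap,1)+w(1,0)$, and $w(1,0)\le w(1,\gap)+w(\gap,0)=1+B$. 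Every inequality with a $1$ on the left has a nonzero term on the right, since distinct endpoints force some step in the chain to be between distinct symbols. So $w$ is a quasimetric with values in $\{0,1,B\}$.

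Any alignment of cost $<B$ inserts nothing and never substitutes a source $1$ by a target $0$. Hence such an alignment is a monotone injection from positions of $Y$ into positions of $X$ in which every target $0$ lands on a source $0$; all unmatched source characters are deleted at cost $1$. Conversely, any such injection yields an alignment of cost at most $|X|\le N$. Thus the answer is at most $N$ if a valid embedding exists and at least $B$ otherwise. We choose $B\coloneqq \alpha(N)^2N+1$, which is computable in $N^{1+o(1)}$ time. Then any $\alpha(N)$-approximation distinguishes the two cases, so an $\Oh(N^{2-\delta})$-time approximation would decide OV in truly subquadratic time.

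The encoding of a coordinate pair is dictated by the cost $B$ of a target $0$ landing on a source $1$. For $a\in A$ we write bit $a_i$ in $X$, and for $b\in B$ we write $1-b_i$ in $Y$. The forbidden event (target $0$, source $1$) then happens exactly when $a_i=b_i=1$. A target character $1$ is a wildcard matching either source symbol.

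We build coordinate gadgets with separators so that the $i$-th coordinate of a target vector gadget is forced onto the $i$-th coordinate of a source vector gadget. For this we use runs of $0$s in $Y$, which can only be absorbed by equally long runs of $0$s in $X$. For example, each coordinate is framed by a block $0^{L}$ of length $L$ exceeding the total number of $0$s inside a single vector gadget. Vector gadgets are delimited by longer $0$-runs of length $L'\gg L$ that only vector boundaries in $X$ provide.

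We then follow the standard OV-to-alignment template. $Y$ is a single target gadget built from all of $B$ with the appropriate separators, and $X$ consists of $|A|$ copies of a dummy all-$0$ source gadget (which accepts any target gadget) together with the real source gadgets for $A$. The slack is arranged so that one target vector can be matched to a real source gadget while all others go to dummies. The yes-case alignment then has cost equal to the number of deleted source characters, which is at most $N=\Oh(n\,d)$ with $d=\Oh(\log n)$.

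The main obstacle is the soundness direction: in a no-instance, every monotone injection respecting the $0$-constraint must fail somewhere. The difficulty is that an embedding may skip source characters freely, since deletions are cheap, so the run-length separators must be shown to rigidly synchronize gadgets. My first attempt is a counting argument. Every $0$ of $Y$ consumes a $0$ of $X$, and the number of $0$s of $X$ is only barely sufficient; by design it exceeds the number of $0$s in $Y$ by less than $L$. Consequently no separator run of $Y$ can be split across two source separators or straddle a coordinate. This forces each target vector gadget to align coordinatewise with a single source gadget, dummy or real. At least one target vector lands on a real gadget $a$, and nonorthogonality of that pair then produces a forbidden target-$0$-on-source-$1$ match. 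If the counting turns out to be too loose, I would switch to $1$-runs in $X$ as hard separators, because target $0$s cannot cross them, and redo the same synchronization argument with those.
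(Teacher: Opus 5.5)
Your outer argument is fine: the triangle-inequality check, the choice $B=\alpha(N)^2N+1$, and the threshold test all work. The fatal problem is the weight function. You observed correctly that, under your weights, a cost-$<B$ alignment exists if and only if $Y$ embeds into $X$ with every target $0$ on a source $0$, every target $1$ anywhere, and every source character skippable. That problem is easy. The constraint on the $j$-th target position depends only on the source character chosen for it, so greedy leftmost matching decides it in $\Oh(N)$ time. The justification is a one-line induction: if $q'_0<\dots<q'_{|Y|-1}$ is any valid embedding and greedy has chosen $q_{j-1}\le q'_{j-1}$, then $q'_j$ is itself an admissible choice at step $j$, so $q_j\le q'_j$.

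Hence no choice of gadgets, separators, zero counting, or $1$-run separators can make your reduction both complete and sound unless OV is solvable in $n^{1+o(1)}$ time. For $\alpha(N)\ge N$ your weight pattern even admits a linear-time $\alpha$-approximation:
\begin{itemize}
\item if $X=Y$, output $0$;
\item if the greedy embedding exists, then $\ed^w\in[1,N]$, so output $1$;
\item otherwise $\ed^w\in[B,|X|+B|Y|]\subseteq[B,BN]$, so output $B$.
\end{itemize}
The failure is visible in your template too. In a no-instance you can send the $j$-th target gadget into the $j$-th all-$0$ dummy and delete all real gadgets at cost $1$ per character, staying below $B$. Also, a target run $0^L$ need not sit on a contiguous source run: it can use any $L$ source zeros, with the ones between them deleted. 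So your run-length separators do not synchronize anything.

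The missing idea is that some source symbol must be \emph{undeletable}, so it is forced to be matched. This is what caps how many target characters a stretch of deletable source symbols can absorb, and it turns the problem into one with gap constraints, which greedy no longer solves. The paper sets $w(1,0)=w(1,\gap)=1$ and $B$ for every other unequal pair. A cost-$<B$ alignment then has no insertions, every source $0$ matches a target $0$, and every target $1$ matches a source $1$.

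The paper also reduces not directly from OV but from binary subsequence matching with gaps $\ell_i\le84$, which is OV-hard by Day, Kosche, Manea, and Schmid. Each symbol $a$ is encoded as $C_a=0^4\,1\,a\,(1-a)\,1\,0^3$. The only target zero runs of length at least $4$ are codeword starts, which forces each source codeword onto an identical target codeword. Each source block $1^{1+12\ell_i}$ can absorb at most $1+12\ell_i$ target characters, which is exactly the encoded gap bound. The padding $1^{|Y|}$ at both ends of $X$ absorbs the unused prefix and suffix of $Y$.
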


These lower bounds show that both assumptions on the weight functions
in \cref{thm:weighted} are essential in general.

\subsection{Open Questions}

The tradeoff between running time and approximation remains far from
understood, even for ordinary edit distance.  We highlight the
following directions.

\begin{enumerate}
\item \textbf{What is the best approximation factor for truly subquadratic algorithms?}
Can one obtain a $(3-\varepsilon)$-approximation in truly subquadratic
time for some absolute constant $\varepsilon>0$?  Our correction
pays for losses that vanish with the input length.  Saving a fixed
fraction of the total cost would require new ideas.
The approximation scheme of Mao and Rubinstein~\cite{MR26} makes a step in this direction from a different starting point but their time complexity of $N^2/2^{\log^{\Omega(1)}N}$ is not truly subquadratic.
More ambitiously, can one
obtain a $(1+\varepsilon)$-approximation in truly subquadratic time
for every fixed $\varepsilon>0$?  No known fine-grained lower bound
even excludes $(1+\varepsilon)$-approximation in
$\Oh(N)$ time for any fixed positive $\varepsilon$.

\item \textbf{Faster factor-three algorithms.}
Can the exponents $11/6$ and $40/21$ be reduced?  A natural target
is to approach the $N^{8/5+o(1)}$ running times available for
factors just above three, for both unit and arbitrary metric costs.
It is also unclear whether arbitrary metrics must incur any
additional polynomial overhead over the unit-cost case.

\item \textbf{Constant factor in almost-linear time.}
Does there exist an absolute constant $C$ such that $C$-approximation of edit distance can be achieved in
$N^{1+o(1)}$ time?  Andoni--Nosatzki~\cite{AN20} allow every fixed
exponent $1+\delta$, but their approximation factor depends on $\delta$.
For arbitrary metric weights, even an analogue
of their fixed-$\delta$ tradeoff would be valuable.

\item \textbf{Deterministic constant-factor approximation.}
Can the sampling in these algorithms be replaced so as to obtain
a \emph{deterministic} constant-factor approximation in truly subquadratic
time?  This remains a natural question even for unit costs.
For the related longest common subsequence problem,
Boneh, Golan, and Kraus~\cite{BGK25LCS} give a deterministic
$\Oh(N^{3/4}\log N)$-approximation in $\tO(N)$ time over arbitrary
alphabets.

\item \textbf{Approximation schemes for arbitrary metrics.}
Can the $(1+\varepsilon)$-approximation of Mao--Rubinstein~\cite{MR26}
be extended to arbitrary metric weights with a saving of the form
$2^{\log^{\Omega(1)}N}$ over quadratic time?  Such a result would address a
different tradeoff from the truly subquadratic factor-three
algorithms considered here.  Logarithmic savings are known for exact
weighted edit distance over a constant-size alphabet: Crochemore,
Landau, and Ziv-Ukelson~\cite{CLZ03} achieve $\Oh(N^2/\log N)$ time
with unrestricted real-valued scoring matrices.  This does not give
an alphabet-independent speedup in our metric-oracle model.
A weaker target is a $(1+\varepsilon)$-approximation in
$\Oh(N^2/\log^c N)$ time for some constant $c>0$ and every fixed
$\varepsilon>0$, independently of the alphabet size and the
numerical range of the costs.
\end{enumerate}

\subsection{Paper Organization}

\cref{sec:preliminaries} introduces string notation,
alignment graphs, and the DIST-matrix and matrix-searching
tools used in both algorithms.
\cref{sec:overview-unit} proves our unit-cost algorithm while \cref{sec:overview-weighted} proves its weighted counterpart.
\cref{sec:nonmetric} proves the unconditional oracle lower
bound, the small-alphabet conditional lower bounds, and the approximation
algorithm for symmetric weight functions over a binary alphabet.

\section{Preliminaries}
\label{sec:preliminaries}

We collect some notation and standard algorithmic tools.

\subparagraph{Strings and edit distance.}
\label{sec:prelim-strings}
Write $m\coloneqq |X|$, $n\coloneqq |Y|$, and $N\coloneqq \max\{2,m+n\}$.
The empty string is denoted by $\eps$.  For integers $a\le b$, write $\irange{a}{b}$ for the
set of integers from $a$ through $b$, inclusive, and $\range{a}{b}$ when $b$ is excluded.
We index string characters from zero: $S[i]$ is the character at position
$i\in\range{0}{|S|}$.  For $a, b \in \fragmentcc{0}{|S|}$ with $a \le b$, let
$S\range{a}{b}$ be the substring of $S$ containing the characters with indices in $\range{a}{b}$, with
$S\range{a}{a}=\eps$.
Substrings at different positions remain different positioned objects
even when their contents coincide.

We write $\ed(X,Y)$ for unit-cost edit distance and $\ed^w(X,Y)$
for the minimum alignment cost under a weight function $w$.
Unit-cost edit distance is a metric on finite strings, including the
empty string: for all strings $F,G,H$,
\[
 \ed(F,G)=0\ \Longleftrightarrow\ F=G,\qquad
 \ed(F,G)=\ed(G,F),\qquad
 \ed(F,H)\le\ed(F,G)+\ed(G,H).
\]
If $w$ is a metric on $\overline\Sigma\coloneqq\Sigma\cup\{\gap\}$,
then $\ed^w$ is also a metric on strings and has the same properties.

\subparagraph{Alignment graph.}
\label{sec:prelim-alignment}
The alignment graph has a vertex $(x,y)$ for
every $x\in\irange{0}{m}$ and $y\in\irange{0}{n}$, representing the
prefixes $X\range{0}{x}$ and $Y\range{0}{y}$.
Its horizontal edge $(x,y)\to(x+1,y)$ represents the deletion of $X[x]$ and costs
$w(X[x],\gap)$; its vertical edge $(x,y)\to(x,y+1)$ represents the insertion of $Y[y]$
and costs $w(\gap,Y[y])$.  A diagonal edge
$(x,y)\to(x+1,y+1)$ pairs the two symbols $X[x]$ and $Y[y]$ for a cost of $w(X[x],Y[y])$.
Every path from $(0,0)$ to $(m,n)$ in the alignment graph
describes an alignment of the same cost.  The usual dynamic program
finds a shortest path and the corresponding alignment in $\Oh(mn)$ time~\cite{WF}.
In our drawings, $(0,0)$ is at the top left: source coordinates
increase to the right and target coordinates increase downward.
\cref{fig:unit-alignment} illustrates the unit-cost case.

\begin{figure}
\centering
\begin{tikzpicture}[x=0.48cm,y=-0.43cm,>=stealth,font=\small]
  \foreach \x in {0,...,11} {
    \foreach \y in {0,...,11} {\draw[gray!25,->] (\x,\y)--(\x,\y+1);}}
  \foreach \x in {0,...,10} {
    \foreach \y in {0,...,12} {\draw[gray!25,->] (\x,\y)--(\x+1,\y);}
    \foreach \y in {0,...,11} {\draw[gray!20,->] (\x,\y)--(\x+1,\y+1);}}
  \foreach \x in {0,...,11} {
    \node[above] at (\x,-0.15) {\scriptsize $\x$};
    \foreach \y in {0,...,12} {\fill[gray!50] (\x,\y) circle (0.55pt);}}
  \foreach \y in {0,...,12} {\node[left] at (-0.15,\y) {\scriptsize $\y$};}
  \foreach \x/\a in {0.5/a,1.5/b,2.5/c,3.5/d,4.5/e,5.5/f,6.5/g,7.5/h,8.5/i,9.5/j,10.5/k}
    {\node at (\x,-1.2) {\texttt{\a}};}
  \foreach \y/\a in {0.5/a,1.5/b,2.5/u,3.5/v,4.5/c,5.5/d,6.5/e,7.5/f,8.5/h,9.5/i,10.5/j,11.5/l}
    {\node at (-1.25,\y) {\texttt{\a}};}
  \node at (5.5,-2.1) {$X$ (source string)};
  \node[rotate=90] at (-2.2,6) {$Y$ (target string)};
  \draw[very thick,teal!80!black,->] (0,0)--(1,1)--(2,2);
  \draw[very thick,orange!90!black,->] (2,2)--(2,3);
  \draw[very thick,orange!90!black,->] (2,3)--(2,4);
  \draw[very thick,teal!80!black,->] (2,4)--(3,5)--(4,6)--(5,7)--(6,8);
  \draw[very thick,orange!90!black,->] (6,8)--(7,8);
  \draw[very thick,teal!80!black,->] (7,8)--(8,9)--(9,10)--(10,11);
  \draw[very thick,violet!85!black,->] (10,11)--(11,12);
  \node[left,inner sep=1pt] at (1.95,2.5) {$1$};
  \node[left,inner sep=1pt] at (1.95,3.5) {$1$};
  \node[above,inner sep=1pt] at (6.5,7.95) {$1$};
  \node[right,inner sep=1pt] at (10.4,11.15) {$1$};
  \node[anchor=north west,align=left] at (13,0)
    {An optimal alignment of cost $4$:};
  \node[anchor=north west] at (13,1.7)
    {$\begin{array}{@{}*{13}{c@{\;}}}
       \mathtt a&\mathtt b&\gap&\gap&\mathtt c&\mathtt d&\mathtt e&\mathtt f&\mathtt g&\mathtt h&\mathtt i&\mathtt j&\mathtt k\\
       \mathtt a&\mathtt b&\mathtt u&\mathtt v&\mathtt c&\mathtt d&\mathtt e&\mathtt f&\gap&\mathtt h&\mathtt i&\mathtt j&\mathtt l
       \end{array}$};
  \node[anchor=north west,align=left,text width=5.2cm] at (13,4.3)
    {\textcolor{teal!80!black}{Diagonal matches cost $0$.}\\[3pt]
     \textcolor{orange!90!black}{Insert \texttt{u} and \texttt{v}; delete \texttt{g}.}\\[3pt]
     \textcolor{violet!85!black}{Substitute \texttt{k} by \texttt{l}.}\\[3pt]
     Each edit costs $1$.};
\end{tikzpicture}
\caption{The edit-distance alignment graph for $X = \mathtt{abcdefghijk}$ and
$Y = \mathtt{abuvcdefhijl}$.
The highlighted path is an optimal alignment of cost $4$.}
\label{fig:unit-alignment}
\end{figure}

\subparagraph{DIST matrices.}
\label{sec:prelim-oracles}
For strings $F,J$, we use the left-to-right part of their \emph{DIST matrix},
whose entries are
\[
 \operatorname{DIST}_{F,J}^w[q, p]\coloneqq\ed^w(F,J\range{p}{q})
 =\dist\bigl((0,p),(|F|,q)\bigr)
 \qquad(0\le p\le q\le|J|).
\]
Set the entry to $+\infty$ when $p>q$.  We represent this matrix
implicitly, without enumerating all its entries.

\begin{theorem}[Implicit weighted DIST matrices, {\cite{Schmidt98}}]
\label{thm:dist}
For strings $F,J$ and arbitrary finite nonnegative real edit costs,
an implicit representation of their DIST matrix can be constructed in
$\tO((|F|+1)(|J|+1))$ deterministic time and space.
It answers exact entry queries in $\tO(1)$ time.
\end{theorem}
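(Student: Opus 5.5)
This is a cited result, so the plan is to reconstruct a proof via planar multiple-source shortest paths (MSSP), not to follow Schmidt's original argument. Let $G$ be the alignment graph of $F$ against $J$, with weights $w$. It is a planar DAG with $V=(|F|+1)(|J|+1)$ vertices and $\Oh(V)$ edges. The sources $(0,p)$ form the left column and the sinks $(|F|,q)$ form the right column. Both columns lie on the outer face, so $\operatorname{DIST}_{F,J}^w[q,p]=\dist((0,p),(|F|,q))$ is a boundary-to-boundary distance in a plane graph with nonnegative finite lengths. Entries with $p>q$ are set to $+\infty$ explicitly. The corresponding vertices are unreachable anyway, because all edges move down or right.

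First, make shortest paths unique by a standard symbolic perturbation: add to each edge a lexicographic infinitesimal, such as lengths of the form (cost, tie-break index). Second, sweep the source $s=(0,p)$ along the outer face for $p=0,1,\dots,|J|$. For each $p$, maintain the shortest-path tree $T_p$ rooted at $s$ in a dynamic-tree structure, such as link-cut or top trees. These support path-length queries and subtree updates in $\Oh(\log V)$ time. Moving from $s_p$ to $s_{p+1}$ reroots the tree at $s_{p+1}$ and then performs pivots: a tree edge $e$ is replaced by a non-tree edge $e'$ whenever $e'$ becomes tight. Klein's MSSP analysis finds these pivots by maintaining slacks on the dual cotree. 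I expect the parametric view to be cleanest: interpolate the source as a linear shift of potentials along the edge $s_p\to s_{p+1}$, and track the darts whose reduced cost crosses zero. The main obstacle, which I would handle next, is to bound the \emph{total} number of pivots over the whole sweep by $\Oh(V)$. I would use the non-crossing (Monge) structure of shortest paths between boundary vertices: each dart enters the tree at most once during the sweep, because two shortest paths from consecutive boundary sources can be uncrossed. This is the planar "each dart pivots in at most once" lemma, and it yields $\Oh(V)$ pivots, each costing $\Oh(\log V)$.

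Third, to get $\tOh(1)$ entry queries, make the dynamic-tree structure persistent, for example with path copying over a balanced top-tree representation. Each update then creates $\Oh(\log V)$ new nodes, so total space is $\Oh(V\log V)$. Store the version index for each $p$. A query $(q,p)$ opens version $p$ and asks for the root-to-$(|F|,q)$ distance in $\Oh(\log V)$ time, or $\Oh(\log^2V)$ if the persistence layer adds a factor. Strictly speaking, only distances from the root to the sink vertices are queried. So it also suffices to store, for each version, the distance labels of the right column in a persistent balanced search tree with lazy additive updates on contiguous ranges: each pivot changes distances on a subtree, and by planarity that subtree meets the right column in a contiguous interval. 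This gives total preprocessing time and space $\tOh(V)=\tOh((|F|+1)(|J|+1))$.

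Finally, I would check the deterministic claim: the perturbation is symbolic and the data structures are deterministic, so no randomness is needed.
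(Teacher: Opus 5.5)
The paper gives no proof of this statement; it imports it from Schmidt's algorithm for highest-scoring paths in weighted grid graphs, which works directly on the grid DAG. You instead treat the alignment graph as a plane graph whose sources all lie on the outer face. You then run Klein's multiple-source shortest paths with the ``each dart pivots in at most once'' bound and store the answers persistently. This is a legitimate alternative route. It is more general, since it works for any planar graph and gives distances from the left column to every vertex, but it needs heavier machinery and three details that the grid-specific black box hides.

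(i) The graph is a DAG, so from $(0,p+1)$ every vertex with target coordinate at most $p$ is unreachable. To keep spanning trees and a continuous parametric move, add reverse darts of prohibitive length (larger than the total weight, or symbolic) and read entries with $p>q$ as $+\infty$.

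(ii) A pair (cost, tie-break index) with indices summed along paths does not make shortest paths unique, because sums collide. Genuine $\varepsilon^{i}$ perturbations, on the other hand, make comparisons expensive. You need a consistent deterministic perturbation with $\Oh(1)$ overhead per comparison, of the kind developed for MSSP.

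(iii) Path-copying persistence of link-cut or top trees needs care, because their queries walk bottom-up or restructure via expose. So rely on your second variant, a persistent range-add structure over right-column indices.

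In that variant, note that in the parametric view a pivot itself changes no distance. Between consecutive pivots, the blue subtree of the new source is shifted uniformly. Its trace on the right column is a suffix $[\theta,|J|]$ of indices, for two reasons: the fundamental cut of the root dart $(0,p)\to(0,p+1)$ is a bond whose dual cycle meets the outer face once, and the red root sits directly above the new source on the left column. Each step is therefore one persistent range addition. You can obtain $\theta$ from a subtree aggregate or from the last dart of the active dual path. The number of steps is at most the number of pivots plus the number of phases. This gives $\tO((|F|+1)(|J|+1))$ time and space and logarithmic-time entry queries, as claimed.
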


\subparagraph{Totally monotone matrices and the SMAWK algorithm.}
\label{sec:prelim-boundary}
A matrix $A$ is called \emph{totally monotone} if in each submatrix (not necessarily contiguous), the column indices of the leftmost row minima are nondecreasing.
In particular, DIST matrices are totally monotone~\cite[Section 3, equation (2)]{HLLW}.
This property also holds after adding an arbitrary finite constant to each column of these matrices.
The celebrated SMAWK algorithm~\cite{SMAWK} allows searching for row minima of a totally monotone matrix much faster than the time it takes to examine all matrix entries.

\begin{theorem}[SMAWK, {\cite{SMAWK}}]
\label{thm:prelim-smawk}
Given oracle access to a totally monotone matrix with $r$ rows and
$c$ columns, all row minima and a minimizing column for each row
can be found using $\Oh(r+c)$ matrix entry queries and $\Oh(r+c)$ additional
time.
\end{theorem}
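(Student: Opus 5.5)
The plan is the classical two-phase recursion of Aggarwal, Klawe, Moran, Shor, and Wilber. It alternates a column-reduction step (\textsc{Reduce}) with a row-halving step (\textsc{Interpolate}), and the goal is the recurrence $T(r,c)=\Oh(r+c)+T(\ceil{r/2},\min\{c,r\})$, which solves to $\Oh(r+c)$. Throughout, I will only use total monotonicity on $2\times 2$ submatrices.

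That property reads: for rows $i<i'$ and columns $j<j'$, if $A[i,j']<A[i,j]$ then $A[i',j']<A[i',j]$. Contrapositively, if $A[i',j]\le A[i',j']$ then $A[i,j]\le A[i,j']$. This holds because otherwise the leftmost minima of the $2\times2$ submatrix on rows $i,i'$ and columns $j,j'$ would decrease from $j'$ to $j$. Consequently the leftmost minimizing column $j(i)$ of row $i$ is nondecreasing in $i$, since the full matrix is a submatrix of itself.

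\textbf{Step 1 (\textsc{Reduce}, when $c>r$).} Scan the columns left to right and keep a stack $j_1<\dots<j_k$ with $k\le r$. Maintain the invariant that, for each $\ell$, column $j_\ell$ is not the leftmost minimum of any row $i<\ell$ among the columns scanned so far, unless it is dominated only by ties in its favour. Equivalently, any discarded column is never the leftmost minimum of any row.

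When a new column $j$ arrives, compare $A[k,j_k]$ with $A[k,j]$.
\begin{itemize}
\item If $A[k,j_k]\le A[k,j]$, then $j$ loses to $j_k$ in row $k$. By the contrapositive above, $j$ also loses in all rows $i\le k$, and by the invariant it is irrelevant there. So we push $j$ if $k<r$, and discard it otherwise.
\item If $A[k,j_k]>A[k,j]$, then $j_k$ loses to $j$ in row $k$ and hence in all rows $i\ge k$. By the invariant it also loses in rows $i<k$. So we pop $j_k$ and repeat the comparison with the new top of the stack.
\end{itemize}
Each column is pushed and popped at most once, so this step uses $\Oh(r+c)$ queries. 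It leaves at most $r$ columns, and the leftmost minimum of every row is among them.

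\textbf{Step 2 (recursion and \textsc{Interpolate}).} Recurse on the submatrix of even-indexed rows with the surviving columns; this submatrix is still totally monotone. Because $j(\cdot)$ is nondecreasing, the minimum of each odd row $2t+1$ lies among the columns between $j(2t)$ and $j(2t+2)$, inclusive. Scan these columns, taking the leftmost minimum. The scanning intervals of consecutive odd rows overlap only at endpoints, so the total work is $\Oh(r+c')$ with $c'\le r$. Combining the two steps gives the claimed recurrence. The linear total follows because after the first round the column count is at most the row count, and the row count halves each round, giving a geometric sum.

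The main obstacle is getting the invariant in \textsc{Reduce} exactly right in the presence of ties. The paper's definition refers to \emph{leftmost} minima, so both the push/pop rule and the argument that a discarded column is never needed must use the non-strict/strict comparisons exactly as stated above. I would verify this carefully by induction on the scan, checking the boundary cases $k=0$ (empty stack: always push) and $k=r$ separately.
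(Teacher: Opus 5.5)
Your proposal is correct and is exactly the \textsc{Reduce}/\textsc{Interpolate} algorithm of Aggarwal, Klawe, Moran, Shor, and Wilber; the paper does not reprove this theorem and simply cites it as a black box. The remaining loose ends are cosmetic. The \textsc{Reduce} invariant is cleanest stated as ``for every row $i<\ell$, some column to the left of $j_\ell$ has value at most $A[i,j_\ell]$'', which makes your strict/non-strict tie rules automatic. You also need an explicit base case $r=1$ (a single scan), since $\lceil r/2\rceil$ does not shrink there.
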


\section{The Unit-Cost Algorithm}
\label{sec:overview-unit}

We start by proving the unit-cost result.  The construction is close
to earlier unweighted and weighted approximation algorithms, but combines
ideas from the sampled-center and hierarchical-search framework of
Andoni~\cite{Andoni18,Andoni20} and its weighted counterparts by
Mader, Tavasoli, and Wang~\cite{MTW26} and Das, Kipouridis, and
Kociumaka~\cite{DKK26}.  We give a self-contained analysis specialized to unit costs.
Our main novel idea is small: undo one edit in each source block's alignment
to its center.  This saves enough to absorb the sampling loss above the approximation factor three.
We introduce some minor additional simplifications to make the ideas from the literature arguably even simpler.

\subsection{Proof Overview}
\label{sec:unit-overview-graph}

The alignment graph of \cref{sec:preliminaries} gives an
exact quadratic-time algorithm.
Note that if we remove all diagonal edges (matches and substitutions) from the alignment graph, computing distances in the resulting graph becomes trivial:
the distance from vertex $(a, p)$ to vertex $(b, q)$ with $a \le b$ and $p \le q$ is equal to $(b - a) + (q - p)$.
However, such distance may not approximate the corresponding distance in the alignment graph well enough if in the alignment graph the corresponding shortest path uses a lot of weight-zero diagonal edges (matches).
To circumvent this obstruction, we augment this trivial graph with additional edges that help preserve the distances from the alignment graph approximately while ensuring that the distances in the new graph can be computed in truly subquadratic time.
For some pairs of vertices $(a, p)$ and $(b, q)$ we add an edge $(a, p) \to (b, q)$ of weight $z$ satisfying $z\ge\ed(X\range{a}{b},Y\range{p}{q})$.
We call such edges \emph{certifying shortcuts} (or just \emph{shortcuts} for short) and the resulting graph the \emph{shortcut graph}.
Every path made of certified shortcuts and ordinary gap edges (insertions and deletions)
expands into an alignment of no greater cost: replace each shortcut
by an optimal alignment of its two substrings and concatenate.
Thus a shortest-path computation using these edges always gives
an upper bound on the edit distance.
Our task is to add a sufficient number of shortcuts to ensure that the distances in the shortcut graph do not exceed the corresponding distances in the alignment graph by too much.

To avoid processing every column of the alignment graph, choose a maximum block length $W$ and partition
$X$ into nonempty consecutive \emph{source blocks}
$U_i\coloneqq X\range{x_i}{x_{i+1}}$, $i\in\fragmentco{0}{R}$, where
$x_0=0$, $x_R=m$, and every block has length $W$ except possibly
the last. 
In \cref{fig:unit-blocks}, each source block occupies a vertical strip $[x_i,x_{i+1}]\times[0,n]$.
Split an optimal alignment at its first vertex on each interior source block boundary.
Its pieces align $U_i$ with successive, possibly empty substrings
$V_i\coloneqq Y\range{p_i}{p_{i+1}}$, with $p_0=0$ and $p_R=n$.
Write $d \coloneqq \ed(X, Y)$ and let $d_i = \ed(U_i,V_i)$ be the cost of the alignment piece $i$.
We have $\sum_i d_i=d$.

\begin{figure}
\centering
\begin{tikzpicture}[x=0.53cm,y=-0.40cm,>=stealth,font=\small]
  \fill[blue!5] (0,0) rectangle (4,12);
  \fill[orange!5] (4,0) rectangle (8,12);
  \fill[green!5] (8,0) rectangle (11,12);
  \draw[step=1,gray!22] (0,0) grid (11,12);
  \fill[blue!15] (0,0) rectangle (4,6);
  \fill[orange!18] (4,6) rectangle (8,9);
  \fill[green!18] (8,9) rectangle (11,12);
  \foreach \x in {0,4,8,11} {\draw[gray!65] (\x,0)--(\x,12);}
  \foreach \y in {0,6,9,12} {\draw[gray!55,densely dotted] (0,\y)--(11,\y);}
  \draw[thick,teal!70!black,dashed]
    (0,0)--(1,1)--(2,2)--(2,3)--(2,4)--(3,5)--(4,6)
    --(5,7)--(6,8)--(7,8)--(8,9)--(9,10)--(10,11)--(11,12);
  \draw[very thick,blue!70!black,->] (0,0)--(4,6);
  \draw[very thick,orange!85!black,->] (4,6)--(8,9);
  \draw[very thick,green!50!black,->] (8,9)--(11,12);
  \foreach \x/\y in {0/0,4/6,8/9,11/12}
    {\fill (\x,\y) circle (1.8pt);}
  \foreach \x/\lab in {0/{x_0=0},4/{x_1=4},8/{x_2=8},11/{x_3=11}}
    {\node[above] at (\x,-0.1) {$\lab$};}
  \node[above,blue!70!black] at (2,-1.25) {$U_0=\mathtt{abcd}$};
  \node[above,orange!85!black] at (6,-1.25) {$U_1=\mathtt{efgh}$};
  \node[above,green!50!black] at (9.5,-1.25) {$U_2=\mathtt{ijk}$};
  \foreach \y/\lab in {0/{p_0=0},6/{p_1=6},9/{p_2=9},12/{p_3=12}}
    {\node[left] at (-0.1,\y) {$\lab$};}
  \node[anchor=north west,align=left,text width=5cm] at (12.5,0)
    {Each colored rectangle is a source block's piece of the fixed alignment.\\[5pt]
     Its arrow is the shortcut replacing that piece.\\[5pt]
     The dashed path shows the original edits.\\[8pt]
     \textcolor{blue!70!black}{$V_0=\mathtt{abuvcd}$, $d_0=2$;}\\[3pt]
     \textcolor{orange!85!black}{$V_1=\mathtt{efh}$, $d_1=1$;}\\[3pt]
     \textcolor{green!50!black}{$V_2=\mathtt{ijl}$, $d_2=1$.}};
\end{tikzpicture}
\caption{Partitioning the source string in \cref{fig:unit-alignment}
into source blocks of maximum length $W = 4$.}
\label{fig:unit-blocks}
\end{figure}

Instead of running the DP inside each full strip, use certified
shortcuts from $(x_i,p)$ to $(x_{i+1},q)$, with $p\le q$.
If every such shortcut had
its exact label, the shortest path would equal $d$.
Our goal is to add some shortcuts such that $\ed(U_i, V_i)$ is well-approximated.

\subparagraph{How the previous approximation framework works.}
Pick some substrings of $Y$ as reference strings, called \emph{centers},
and assign each source block its closest center $C_i$.  Put
$r_i\coloneqq \ed(U_i,C_i)$, the source block's \emph{radius}.
The triangle inequality shows that routing a comparison through
a center gives, for every string $T$,
\begin{equation}\label{eq:overview-center-routing}
 \ed(U_i,T)\le r_i+\ed(C_i,T)
             \le\ed(U_i,T)+2r_i.
\end{equation}
If $r_i\le\ed(U_i,T)$, this is a factor-three approximation.  Only
comparisons closer than the center require further attention.
Random centers make such comparisons sparse, and a second sampling
procedure searches for them. A shortest-path computation over the resulting shortcuts gives
the $3+\varepsilon$ guarantee~\cite{Andoni18,Andoni20}.

\subparagraph{The additional operation.}
We offset the $+\eps$ loss by moving each center one edit closer to its source block.
Allowing the empty center ensures $r_i\le |U_i|\le W$.
If $r_i>0$, undo any one edit in an optimal alignment
of $U_i$ onto $C_i$.  The corrected center $\widehat C_i$ has distance
$\widehat r_i\coloneqq r_i-1 \le (1 - 1 / W) r_i$ to $U_i$.  Thus the routing overhead in
\cref{eq:overview-center-routing} decreases from $2r_i$ to $(2 - 2 / W) r_i$.
The rest of the construction selects suitable comparison,
sampling, and boundary-distance tools so that this saving counteracts the $+\eps$ in the approximation guarantee while making sure that the algorithm can still be implemented efficiently.

\subsection{Sampling Centers and Undoing One Edit}
\label{sec:unit-overview-centers}

We now formally describe our algorithm.
The algorithm works in prescribed time complexity with high probability.
If the algorithm exceeds the desired time bound, we terminate the algorithm and return a trivial alignment of $X$ onto $Y$ that deletes all characters of $X$ and inserts all characters of $Y$.

Choose an integer $W \in \fragmentcc{1}{N}$ (eventually about $N^{1/6}$).
If either input string has length below $W$, ordinary
DP takes $\Oh(NW)$ time.
Furthermore, if $X = Y$, return the trivial alignment of weight $0$.
Hence assume $m,n\ge W$ and $d \ge 1$.
There are $R\coloneqq \lceil m/W\rceil$ source blocks $U_i$.

Choose an integer $k \in \fragmentcc{1}{N}$ (eventually about $N^{1/3}$).
Independently and uniformly sample $\tO(N/k)$ target string starts from $\fragmentco{0}{n}$.
A sampled start $p$ supplies the segment
\[
 B_p\coloneqq Y\range{p}{\min\{n,p+2W\}}
\]
and \emph{all} its prefixes as candidate centers.  For each source block $U_i$ and
sampled start $p$, run ordinary DP against $B_p$.  The final DP column gives the distances to all the prefixes of $B_p$ at once.
Out of all such prefixes over all sampled starts $p$, choose a minimum-distance candidate substring $C_i$ and retain its optimal alignment and radius $r_i \coloneqq \ed(U_i, C_i)$.
Note that an empty center is allowed so we have a guarantee $r_i\le \ed(U_i, \eps) = |U_i| \le W$.
Furthermore, $\ed(J, F) \ge ||J| - |F||$ holds for arbitrary strings $J$ and $F$ so bounding the length of $B_p$ by $2W$ does not lose any potential longer center.

For a fixed source block $U_i$, rank the $n$ starting positions of $Y$ by their best prefix distance to $U_i$.
A draw hits at least one of the first $k$ starts with probability $k/n$; $t$ independent
draws miss them with probability $(1-k/n)^t\le e^{-tk/n}$.
A logarithmic multiple of $n/k$ draws therefore hits them with high
probability.
The chosen radius is therefore
no larger than the $k$-th ranked value.
By the union bound, we may assume that 
for every $i$, at most $k$ starts $p$ admit a $q\in \fragmentcc{p}{n}$
with $\ed(U_i,Y\range{p}{q})<r_i$.
The argument above assumes $k \le n$. The case $k > n$ follows trivially.

\subparagraph{The corrected centers.}
We now describe how we correct the centers $C_i$ by undoing a single edit in the optimal alignment of $U_i$ onto $C_i$.
If $r_i=0$, set $\widehat C_i\coloneqq C_i$.
Otherwise, select an arbitrary positive-cost column of the recorded optimal
alignment of $U_i$ onto $C_i$.  For a substitution, replace that center symbol by its
source partner.  For an inserted center symbol, delete it.  For a
deleted source symbol, insert it at its place in the center.
Call the resulting string $\widehat C_i$.
We have
\begin{equation}\label{eq:overview-one-edit}
 \widehat r_i\coloneqq \ed(U_i,\widehat C_i)
       =\max\{0,r_i-1\}.
\end{equation}
We use the corrected center to add certified shortcuts from $(x_i, p)$ to $(x_{i+1}, q)$ for every $p, q \in \fragmentcc{0}{n}$ with $p \le q$ of weight
\begin{equation}\label{eq:overview-repaired-routing}
 z_i(p,q)\coloneqq \widehat r_i+\ed(\widehat C_i,Y\range{p}{q}),\qquad
 \ed(U_i,Y\range{p}{q})\le z_i(p,q)
 \le\ed(U_i,Y\range{p}{q})+2\widehat r_i.
\end{equation}
We call these shortcuts \emph{routing shortcuts}.
Note that there are $\Oh(n^2)$ such shortcuts for each $U_i$ so we cannot construct them explicitly.
We later explain how we nevertheless manage to compute distances through them efficiently.

\subsection{Discovering the Exact Shortcuts}
\label{sec:unit-overview-search}

The routing shortcuts suffice when the alignment is expensive relative to
the radii.  For example, if over any consecutive group of source blocks the sum of
radii is at most the corresponding optimal alignment cost.
It therefore remains to search for better shortcuts in regions whose alignment
is inexpensive at every scale.

The \emph{diagonal offset} of a vertex $(x,y)$ is $y-x$.
Along an alignment, this offset changes only on gap edges.  Consequently,
an alignment graph path from $(a,p)$ to $(b,q)$ of cost $c$ satisfies
\begin{equation}\label{eq:unit-offset}
 |(q-b)-(p-a)|\le c.
\end{equation}
Therefore, if the optimal alignment is inexpensive in the region around $(x_h, p_h)$, finding this vertex may guide us to find vertices $(x_i, p_i)$ for $i$ close to $h$.
We use this idea in the following \emph{exact shortcut} algorithm.

Build a balanced binary tree on the ordered source block indices $\fragmentco{0}{R}$.  For a node
$v$ of the binary tree, let
\[
 \rho_v\coloneqq \sum_{i\in v} r_i.
\]
Choose an accuracy parameter $\eta\in(0,1)$ (eventually about $N^{-1 / 6}$).
At every node independently sample $\tO(1/\eta)$ source blocks uniformly
among its blocks with positive radius $r_i > 0$.
Skip nodes containing no positive-radius blocks.
This tree search is based on Andoni~\cite[Sections 3.1--3.2]{Andoni18}.
We then test the sampled source blocks against some substrings of $Y$.

An \emph{anchor} is a diagonal offset $\Delta$ obtained from a sampled source block
$h$ and a target start $p$, where $\Delta\coloneqq p-x_h$, for which at least
one tested substring $Y\range{p}{q}$ satisfies $\ed(U_h,Y\range{p}{q})<r_h$.

\begin{enumerate}
\item \textbf{At the root,} for each sampled source block $i$, test every start
    $p\in\fragmentco{0}{n}$ and every end $q\in\fragmentcc{p}{\min\{n,p+2W\}}$.
Accept exactly the substrings of distance below $r_i$ and store their
anchors.
\item \textbf{\boldmath At a child $v$ of $u$,} for each sampled source block $i$ and
each parent anchor $\Delta$, search the integer starts in
\[
 \irange{x_i+\Delta-\rho_u}{x_i+\Delta+\rho_u}\cap\fragmentco{0}{n}.
\]
Test all ends up to
$2W$ characters after each searched start, and keep the accepted anchors.
\item \textbf{\boldmath At a leaf for source block $i$,} output an exact shortcut $(x_i, p) \to (x_{i+1}, q)$ for
    every accepted substring $Y\fragmentco{p}{q}$.
\end{enumerate}

Use the implicit DIST matrices (\cref{thm:dist}) to test each sampled source block against the whole interval of target starts at once.  If the interval contains
$\ell$ starts, extend its target segment by up to $2W$ characters
to the right.  Preprocess the DIST matrix of $U_i$ against that
segment and answer the $\Oh(W\ell)$ tests in
$\tO(W(\ell+W))$ time in total.

\subsection{Evaluating the Distance in the Shortcut Graph and Recovering the Alignment}
\label{sec:unit-overview-evaluation}

Our shortcut graph will consist of gap edges, routing shortcuts, and exact shortcuts.
Exact shortcuts are constructed explicitly while gap edges and routing shortcuts are processed in batches.
Process the source blocks in order.  At their boundaries we retain a DP
value for \emph{every} target position.  Let $v_i[p]$ be the least shortcut-graph cost of
reaching $(x_i,p)$.  Initially $v_0[p]\coloneqq p$ for each $p \in \fragmentcc{0}{n}$.
The exact shortcuts at source block $i$ can be applied individually.  The routing
shortcuts are too numerous to list: they have a label for every pair
$p\le q$.  We instead compute their full contribution at once:
\begin{equation}\label{eq:overview-routing-update}
    f_{i + 1}[q]\coloneqq \min_{p \in \fragmentcc{0}{q}}
       \{v_i[p] + \widehat r_i + \ed(\widehat C_i,Y\range{p}{q})\}.
\end{equation}

Hence, we may assume that we have the distances $v_i[p] + \widehat r_i$ to the vertices on the left boundary of the alignment graph of $\widehat C_i$ onto $Y$, and we want to compute the distances to the right boundary.

Note that the $(n + 1) \times (n + 1)$ matrix with entry $v_i[p] + \widehat r_i + \ed(\widehat C_i,Y\range{p}{q})$ in the $p$-th column of the $q$-th row is totally monotone:
\cref{sec:prelim-boundary} states this for the boundary-distance matrix even after adding a constant to each column.
Therefore, if we could get $\tOh(1)$-time oracle access to the values $\ed(\widehat C_i,Y\range{p}{q})$, we would be able to use the SMAWK algorithm (\cref{thm:prelim-smawk}) to compute all values of $f_{i+1}$ in $\tOh(n)$ time.

If we had a small number of centers and did not use center corrections but used $C_i$ instead of $\widehat C_i$, we could preprocess the alignment graph of $C_i$ onto $Y$ for each center $C_i$ (and reuse preprocessing for all $j$ with $C_j = C_i$) using the implicit DIST matrices (\cref{thm:dist}) and then get $\tO(1)$-time oracle access to all values $\ed(C_i,Y\range{p}{q})$.
In our case, all $\widehat C_i$ could be distinct so we would not be able to share preprocessing among many source blocks.
However, $\widehat C_i$ differs from $C_i$ only by at most a single edit, so we can still benefit from shared preprocessing.
More precisely, we can write
\[
 \widehat C_i=L_i U_i^\star R_i,\qquad |U_i^\star|\le1,
\]
where $L_i$ and $R_i$ are substrings of some sampled substring $B_p$ of $Y$.
Instead of propagating the distances directly through the whole $\widehat C_i$, we will first propagate them through $L_i$ using shared preprocessing of $B_p$, then trivially propagate them through a single column if $|U_i^\star| = 1$, and then propagate through $R_i$ again using shared preprocessing of $B_p$.

For each sampled segment $B_p$, build a balanced
binary tree over its character interval $\fragmentco{p}{\min\set{n, p + 2 W}}$.
Preprocess every binary tree node's substring against the whole target string $Y$ using
the implicit DIST representation (\cref{thm:dist}). 
Summing the oracle bounds over all depths gives $\tO(NW)$ time for preprocessing per sampled segment.
Every substring of $B_p$ is a concatenation of logarithmically many
node substrings.
See, e.g.,~\cite[Algorithm 11.7]{Tiskin13} for a similar procedure.

Therefore, by decomposing both $L_i$ and $R_i$ into $\Oh(\log N)$ binary tree pieces, we decompose the alignment graph of $\widehat C_i$ onto $Y$ into $\tOh(1)$ precomputed alignment graphs and a single alignment graph of $U_i^\star$ onto $Y$.
We process these subgraphs in order and propagate the distances through them.
Using the SMAWK algorithm (\cref{thm:prelim-smawk}) and oracle access to the DIST matrices of the preprocessed subgraphs, we process each subgraph in $\tOh(N)$ time.
Furthermore, we process the alignment subgraph of $U_i^\star$ onto $Y$ trivially in $\Oh(N)$ time.
In total, we compute all values of $f_{i+1}$ in $\tOh(N)$ time.

We then combine $f_{i+1}$ with exact shortcuts that we process directly and all horizontal and vertical edges to compute the values of $v_{i+1}$.
To process horizontal edges, we update each $v_{i+1}[q]$ with $v_i[q] + |U_i|$.
To process vertical edges, we update each $v_{i+1}[q]$ with $v_{i+1}[q-1]+1$, in increasing order for $q\in\irange{1}{n}$.
After processing all source blocks $U_i$, we return the answer $\widehat D\coloneqq v_R[n]$.

\subparagraph{Recovering an actual alignment.}
Backtracking the value $v_R[n]$ through the DP computation gives indices $y_0, \ldots, y_R$ such that the optimal path of length $\widehat D$ in the shortcut graph goes through vertex $(x_i, y_i)$ for each $i \in \fragmentcc{0}{R}$.
Run ordinary exact DP on the alignment graph between each adjacent pair $(x_i, y_i)$ and $(x_{i+1}, y_{i+1})$ of these vertices and concatenate the resulting
alignments.  Their total cost is no larger than $\widehat D$, and
their total construction time is
\[
 \Oh\left(\sum_i|U_i|(y_{i+1}-y_i+1)\right)=\Oh(NW)
\]
because each $|U_i|$ is bounded by $W$.

\subsection{The Bound on the Approximation Factor}
\label{sec:unit-overview-guarantee}

Consider an optimal alignment of $X$ onto $Y$. Recall its target substrings $V_i$, source-block costs $d_i$ with $\sum_i d_i=d$,
and starts $p_i$.
For a node $v$ of the binary tree in the exact shortcut search, write $D_v\coloneqq \sum_{i\in v}d_i$.
We prove that the shortcut graph with high probability contains a path of weight at most $3 d$ passing through each vertex $(x_i, p_i)$.
Consequently, $\widehat D \le 3d$ because it is the length of a shortest path without any restrictions.

\subparagraph{Lucky source blocks receive exact shortcuts.}
Call a source block $U_i$ \emph{useful} if $D_v < \rho_v$ holds for all nodes $v$ on the root-to-leaf path of $i$.
In particular, $d_i < r_i$ holds for the leaf.
Let $I$ be the set of useful source block indices.
Call a source block \emph{lucky} if every node on its root-to-leaf path samples at least one useful source block.
In particular, each lucky source block is useful.
All other source blocks are called \emph{unlucky}.
We claim that all lucky source blocks $U_i$ get their exact shortcut $(x_i, p_i) \to (x_{i+1}, p_{i+1})$.
At the root, the chosen useful source block $U_h$ compares against all substrings of $Y$ with distance below $r_h$.
Therefore, in particular we compare $U_h$ against $V_h$ at a cost $d_h < r_h$ and record the corresponding diagonal offset.
Suppose the parent $u$ of a node $v$ has the chosen anchor $\Delta = p_h - x_h$ of a
useful source block $h$, and $v$ samples a useful source block $j$.
The part of the fixed optimal alignment between their starting vertices lies
within $u$ and has cost at most $D_u$.  The offset bound gives
\[
 |(p_j-x_j)-\Delta|\le D_u < \rho_u.
\]
The second inequality holds because $U_h$ is useful.  Hence the child
searches $p_j$ and accepts the desired interval $V_j$ because $d_j < r_j$.
Induction reaches the leaf, whose only sampled source block is block $U_i$.
It outputs the exact shortcut for $(x_i, p_i) \to (x_{i+1}, p_{i+1})$.

\subparagraph{Unlucky source blocks have a bounded sum of corrected radii.}
Let $B$ consist of all unlucky source blocks.
That is, all source blocks whose tree path contains a node sampling
no useful source block.  At a node containing positive-radius blocks,
if at least an $\eta$ fraction of these blocks are useful, a logarithmic
multiple of $1/\eta$ independent draws hits one with high probability.
A union bound over all nodes shows that, with high probability, every
node that misses $I$ has at most an $\eta$ fraction of its
positive-radius blocks in $I$.  The assertion is automatic at zero-radius nodes.

Take the highest nodes that miss $I$.  Their source block sets are disjoint and
partition $B$.  Summing the preceding fraction bound over these nodes gives
\[
 |I\cap B|\le\eta b_+,\qquad
 b_+\coloneqq |\{i\in B:r_i>0\}|.
\]
Each radius $r_i$ is bounded by $W$ so the sum of useful unlucky radii is bounded by $\eta b_+ W$.
It remains to bound the sum of radii of all non-useful source blocks.
Consider the highest nodes with $D_v \ge \rho_v$.
Their source block sets are disjoint and partition the non-useful blocks.
Summing the inequalities $D_v \ge \rho_v$ over all of them, we get
\[\sum_{i \notin I} r_i \le d,\]
and thus, taking $\eta\coloneqq1/W$, we get
\[
 \sum_{i\in B}\widehat r_i
 =\sum_{i\in B}r_i-b_+ = \sum_{i \in B \setminus I} r_i + \sum_{i \in B \cap I} r_i - b_+
 \le d+\eta b_+ W-b_+ = d.
\]
The one-unit correction on each positive-radius unlucky block pays
for the total radius of the missed useful blocks.

\subparagraph{Putting everything together.}
Using exact shortcuts for the lucky source blocks and routing shortcuts for the unlucky ones, we obtain
\[
 \widehat D\le\sum_{i\notin B}d_i+
     \sum_{i\in B}(d_i+2\widehat r_i)
 =d+2\sum_{i\in B}\widehat r_i\le3d.
\]

\subsection{Running Time and the Resulting Theorem}
\label{sec:unit-overview-time}

We assume that all events that were claimed to happen with high probability indeed happen.

\subparagraph{Center selection and shared structures.}
For one sampled start, the DPs against its $\Oh(W)$-length segment cost
$\Oh(NW)$ over all source blocks.  Preprocessing the implicit DIST matrices (\cref{thm:dist}) at
all nodes of its decomposition binary tree costs $\tO(NW)$ as well.  Across the $\tO(N/k)$ samples, the center preprocessing costs $\tO(N^2W/k)$ time in total.

\subparagraph{The exact-shortcut search.}
Each exact-shortcut-search binary tree node samples $\tO(1/\eta)$ source blocks, and each has at most $k$
accepted starts.  It therefore has $\tO(k/\eta)$ anchors.
For one sampled child source block, the parent anchors generate intervals of
total width $\tO(k\rho_u/\eta)$.
The anchor computations using the implicit DIST matrices (\cref{thm:dist}) take time $\tO(k(\rho_u + W)W/\eta)$.
Multiplying by the $\tO(1/\eta)$ child samples and summing source
spans over the logarithmic-depth tree gives $\tO(NkW/\eta^2)$.
Here we use the fact that each radius $r_i$ is bounded by $|U_i|$ so the sum of radii at each binary tree level is bounded by $N$.
The root costs $\tO(NW/\eta)$ separately.
At each leaf there are at most $\Oh(kW)$ exact shortcuts for a total of $\Oh(NkW)$.

\subparagraph{Shortcut graph propagation and alignment reconstruction.}
Each of $R=\Oh(N/W)$ source blocks costs $\tO(N)$ to propagate through using the routing shortcuts for a total of $\tO(N^2 / W)$.
Applying the explicit shortcuts costs $\Oh(NkW)$ in total.
Alignment reconstruction takes $\Oh(NW)$ time.  The complete time complexity of the algorithm is therefore
\begin{equation}\label{eq:overview-unit-time}
 \tO\left(\frac{N^2W}{k}+\frac{NkW}{\eta^2}+\frac{N^2}{W}\right).
\end{equation}

Setting
\[
 W\coloneqq \lceil N^{1/6}\rceil,\qquad
 k\coloneqq \lceil N^{1/3}\rceil,\qquad \eta\coloneqq 1/W,
\]
gives a total time complexity of $\tO(N^{11/6})$.
Therefore, we have proved the first main theorem.

\unitmain*

\section{The Metric-Weighted Algorithm}
\label{sec:overview-weighted}

The metric-weighted algorithm follows a very similar roadmap to the unit-cost
algorithm of \cref{sec:overview-unit}.  Introducing weights
requires additional tools developed in~\cite{DKK26,MTW26}; we state the
ones we use where they are needed.
Our algorithm works in prescribed time complexity with high probability.
If the algorithm exceeds the desired time bound, we terminate the algorithm and return a trivial alignment of $X$ onto $Y$ that deletes all characters of $X$ and inserts all characters of $Y$.

Consider an arbitrary metric weight function $w \colon \overline\Sigma^2\to\mathbb R_{\ge0}$.
Put $d\coloneqq\ed^w(X,Y)$. If $X = Y$, return the trivial alignment of weight $0$; hence assume $d > 0$. For a character $a$,
define its \emph{gap cost} by $\gapcost(a)\coloneqq w(a,\gap)$.
For a string $S$, define its \emph{mass} as $\mass(S)\coloneqq\sum_{j=0}^{|S|-1}\gapcost(S[j])$.
Thus $\mass(S)$ is the cost of deleting/inserting the whole of $S$.
First run the algorithm of~\cite[Theorem 1.1]{DKK26} to find a $4$-approximation of $d$.
In $\tO(N^{8/5})$ time it returns a budget $D$ with
$d\le D$ on every outcome and $D\le4d$ with high probability.
Throughout the construction and its analysis, assume that this event holds.

\subsection{Reducing the Problem to Individual Well-Behaved Rectangles}

We first restrict the alignment graph to a small collection of
rectangles that together contain an optimal alignment.
Choose an integer $W \in \fragmentcc{2}{N}$ (eventually about $N^{1/7}$)
and an accuracy parameter $\eta \in (0, 1 / 16]$ (eventually about $N^{-2/21}$)
to control the approximation loss.

\begin{lemma}[Rectangles, source blocks, and target candidates,
  {\cite[Lemmas 5.1 and 6.1]{DKK26}}]
\label{lem:weighted-decomposition}
Given $D>0$, an integer $W \in \fragmentcc{2}{N}$, and a parameter $\eta \in (0, 1 / 16]$,
a deterministic $\tO(N/\eta)$-time algorithm constructs source--target
rectangles $(S_s,J_s)$ covering every vertex and edge of every alignment of $X$ onto $Y$
of cost at most $D$.  Their source projections have disjoint character
interiors, and we have
\[
 N_s\coloneqq |S_s|+|J_s|>0,\qquad \sum_sN_s=\Oh(N).
\]
For every rectangle with $|S_s|,|J_s|\ge W$, the algorithm also constructs:
\begin{enumerate}
\item a partition of $S_s$ into $R_s=\tO(N_s/W)$ nonempty \emph{source blocks}, each of length at most $W$;
\item a family $\mathcal Q_s$ of $M_s\coloneqq|\mathcal Q_s|$ substrings of $J_s$ called \emph{target candidates}, where
$M_s=\tO(N/(\eta^2W))$.
\end{enumerate}
Define the \emph{exact candidate graph} of such a rectangle to contain
its ordinary gap edges and, for every source block $U$ and target candidate
$Q\in\mathcal Q_s$, the corresponding certifying shortcut of cost exactly $\ed^w(U,Q)$.
For rectangles with a side shorter than $W$, use their full alignment graphs.
The union of these graphs, identifying shared boundary vertices, contains
for every alignment of $X$ onto $Y$ of cost $c\le D$ a route with the same
endpoints of cost at most
\begin{equation}\label{eq:overview-weighted-candidates}
 c+\Oh\!\left(\eta D\right).
\end{equation}
The algorithm outputs the rectangles, partitions, and target candidate endpoints,
without computing the shortcut costs.
\end{lemma}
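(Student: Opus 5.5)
This lemma is imported from~\cite[Lemmas 5.1 and 6.1]{DKK26}, so the plan is to reconstruct how those two lemmas combine rather than to invent a new argument. The proof has two stages. The first stage is a covering of the low-cost region of the alignment graph by rectangles. The second is, inside each large rectangle, a sparsification of the possible target endpoints of source blocks.

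\textbf{Stage one: rectangles.} Every alignment of cost at most $D$ stays in a band around the main diagonal, since each gap edge costs at least the gap cost of its character. Gap costs can be tiny, however, so I would measure positions by prefix mass $\mass(X\range{0}{x})$ and $\mass(Y\range{0}{y})$ rather than by index. By the triangle inequality on $w$, a path from $(0,0)$ to $(x,y)$ of cost $c$ has $|\mass(X\range{0}{x})-\mass(Y\range{0}{y})|\le c$. This is the weighted analogue of \cref{eq:unit-offset}.

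To build the rectangles, I would cut $X$ at positions where the prefix mass crosses multiples of roughly $D$. Each piece $S_s$ is paired with the target window $J_s$ whose prefix mass lies within $\pm D$ of the mass of $S_s$, rounded outward to whole characters. Every cost-$\le D$ alignment then passes through these rectangles, their source projections have disjoint interiors, and each target position lies in $\Oh(1)$ windows, giving $\sum_s N_s=\Oh(N)$. A character with mass exceeding $D$ can never be deleted or inserted, so it forces a diagonal edge and can be handled as its own degenerate rectangle.

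\textbf{Stage two: source blocks and target candidates.} Within a rectangle with $|S_s|,|J_s|\ge W$, I would split $S_s$ greedily into blocks of length at most $W$, cutting additionally whenever the accumulated mass reaches $\eta D\cdot W/N_s$. This gives $R_s=\tO(N_s/W)$ blocks.

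The target candidates come from a set $P$ of \emph{portal} positions in $J_s$, chosen so that consecutive portals differ by at most one character or by mass at most $\eta^2 D W/N$. The set $\mathcal Q_s$ then consists of all substrings between two portals whose mass difference is compatible with some source block. The mass-offset bound restricts the allowed lengths, which yields $M_s=\tO(N/(\eta^2 W))$.

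\textbf{Rerouting argument.} Given an alignment of cost $c$, I would snap each block boundary vertex $(x_i,y_i)$ to the nearest portal $y_i'$. The exact shortcut from $(x_i,y_i')$ to $(x_{i+1},y_{i+1}')$ has cost at most the original piece cost plus the masses of the target characters between $y_i$ and $y_i'$ and between $y_{i+1}$ and $y_{i+1}'$, by the metric triangle inequality. Summed over all blocks, this error is $\tO(N/W)\cdot\eta^2 DW/N$, which is $\Oh(\eta D)$ once the logarithmic factors are absorbed by choosing the portal spacing with an extra polylog factor.

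\textbf{Main obstacle.} The hard part is balancing the portal density against the snapping error when gap masses are highly nonuniform. Heavy characters cannot be skipped: a single character with mass far above $\eta^2 DW/N$ must become its own portal boundary. The candidate count must still respect the $\tO(N/(\eta^2W))$ bound. I would handle this with a hierarchical, mass-scaled portal set: for each block, portals are placed only within its admissible mass window, at a granularity proportional to $\eta$ times that block's share of $D$. This is the route I expect DKK26 takes.
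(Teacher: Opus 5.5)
The paper gives no proof of this lemma; it imports it from \cite{DKK26}. So the only question is whether your reconstruction delivers the stated bounds. Stage one is fine in outline: $w(a,b)\ge|\gapcost(a)-\gapcost(b)|$ gives the mass-offset inequality, and mass-$D$ source pieces with $\pm D$ target windows give the cover with $\sum_sN_s=\Oh(N)$.

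The genuine gap is the bound $M_s=\tO(N/(\eta^2W))$, which is the real content of the cited lemma. Your $\mathcal Q_s$ consists of portal pairs ``compatible'' with some block. But for the target piece $V_i$ aligned with $U_i$, the offset inequality only gives $|\mass(V_i)-\mass(U_i)|\le d_i$. A single block may carry $d_i=\Theta(D)$, comparable to the mass of the whole window $J_s$, so this restricts nothing. That leaves two options, and both fail:
\begin{itemize}
\item If ``compatible'' admits deviations up to $D$, then $M_s=\Theta(|P|^2)$. For example, take unit costs, $D=\Theta(N)$, and the paper's parameters $W=\Theta(N^{1/7})$, $\eta=\Theta(N^{-2/21})$. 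Then your spacing $\eta^2DW/N$ is below one character, every position is a portal, and $M_s=\Theta(N^2)$ instead of $\tO(N^{22/21})$.
\item If ``compatible'' is narrower, blocks with large $d_i$ have no candidate within your additive snapping error.
\end{itemize}
The hierarchical portal set in your last paragraph does not repair this. It is built per block, whereas $\mathcal Q_s$ must be a single family serving all blocks and all cost-$\le D$ alignments at once. Moreover, a granularity keyed to ``that block's share of $D$'' is either a fixed quantity, which changes nothing, or depends on the alignment, which is not a construction. Heavy characters, which you single out as the main obstacle, are harmless in your scheme: placing portals on both sides of each adds at most $\mass(J_s)$ divided by the spacing.

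What is missing is scale-dependent rounding, in which the snapping error of block $i$ is charged to $\Oh(\eta d_i+\eta DW/N)$ rather than to a uniform additive term. Take the smallest dyadic scale $2^\ell\ge\max\{d_i,DW/N\}$. Round the endpoints to a grid of mass spacing $\eta2^\ell$ and allow deviations of at most $2^\ell$. Each start then has only $\Oh(1/\eta)$ admissible ends per scale. The total error stays $\Oh(\eta D)$, since $\sum_id_i\le D$ and there are $\Oh(N/W)$ blocks.

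In the unit-cost specialization (all blocks of length $W$, $|J_s|=\Oh(D)$) this yields $\tO\bigl(\frac{D}{\eta DW/N}\cdot\frac1\eta\bigr)=\tO(N/(\eta^2W))$ candidates. One factor $1/\eta$ comes from the start grid and one from the deviation classes. With general metric weights, the common block length is replaced by varying block masses. One must then also make a single such family serve blocks of different masses within this count, which your proposal does not address.

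A smaller slip: your extra cuts at accumulated mass $\eta DW/N_s$ create about $\mass(S_s)N_s/(\eta DW)\approx N_s/(\eta W)$ blocks, contradicting $R_s=\tO(N_s/W)$. The rerouting never uses them, so simply drop them.
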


We stress the fact that some rectangles may have empty source or target projections.
Rectangles with an empty side contain only gap edges.  Process every rectangle with $|S_s|<W$ or $|J_s| < W$ exactly by
ordinary DP on its alignment grid.
This takes $\Oh(\sum_s N_s W) = \Oh(NW)$ time.
The remaining rectangles have $|S_s|, |J_s|\ge W$.

If we could use exact candidate graphs for all remaining rectangles, their
combined error on an optimal alignment would be
$\Oh(\eta D)=\Oh(\eta d)$ by \cref{lem:weighted-decomposition}.
Thus any $\eta=o(1)$ would give a $(1+o(1))$-approximation.
Our goal is to approximate the distances in each exact candidate graph
by a factor $3-o(1)$, with enough saving below three to absorb that loss.

The cover simplifies the task.  We may search within the subgraph of the alignment graph
formed by the rectangles and process them from left to right.  At
each rectangle, approximate distances to its left boundary are already
available; our task is to propagate them to its right boundary.
We can now focus on one rectangle.

\subsection{Sampling Centers and Correcting One Substring}

Fix a remaining rectangle and abbreviate $S\coloneqq S_s$, $J\coloneqq J_s$,
keeping $N_s$ for its size.
The first item of \cref{lem:weighted-decomposition} partitions $S$
into $R\coloneqq R_s$ source blocks $U_i\coloneqq S\range{x_i}{x_{i+1}}$,
$i\in\fragmentco{0}{R}$, with $x_0=0$ and $x_R=|S|$.
Write $\mathcal Q\coloneqq\mathcal Q_s$ and $M\coloneqq M_s$.
If $M=0$, the exact candidate graph contains only gap edges.  Propagate
the boundary distances through it trivially in $\Oh(N_s)$ time and skip the
remaining steps for this rectangle.  Henceforth assume $M\ge1$.
As in \cref{sec:unit-overview-centers}, we seek a center for
each source block.

Under unit costs, a center longer than $2W$ cannot beat the empty
center.  We have no such length bound here: a center may contain
arbitrarily many symbols whose insertion costs are small.
The following lemma lets us omit most of these symbols while increasing
the distance to a source block by at most a factor of $1+\eta$.

\begin{lemma}[Short target candidate representations,
  {\cite[Lemma 7.2]{DKK26}}]
\label{lem:weighted-representatives}
After $\Oh(|J|)$-time preprocessing, a target candidate $Q$ has a
subsequence $Q'$ of length $\Oh(W/\eta)$, constructible in
$\tO(W/\eta)$ time, such that for every string $T$ of length at most $W$,
\begin{equation}\label{eq:overview-weighted-representation}
 \ed^w(T,Q')\le(1+\eta)\ed^w(T,Q).
\end{equation}
\end{lemma}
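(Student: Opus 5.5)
The plan is to construct $Q'$ by keeping every symbol of $Q$ that could matter for aligning against a short string $T$ with $|T|\le W$, and discarding the rest. The intuition is that any alignment of $T$ onto $Q$ matches at most $W$ characters of $Q$ diagonally, and all other characters of $Q$ are inserted at cost $\gapcost(\cdot)$. Expensive characters should be kept explicitly. Long runs of cheap characters can be dropped: if they are matched in the alignment, we can pay for them by inserting them instead, at a cost we can control.

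\textbf{Construction.} Write $\mu \coloneqq \mass(Q)$. Call a position $j$ of $Q$ \emph{heavy} if $\gapcost(Q[j])$ is among the $\Oh(W/\eta)$ largest gap costs in $Q$; more precisely, we keep the $\lceil W/\eta\rceil$ positions of largest gap cost, with ties broken arbitrarily. Let $Q'$ be the subsequence of heavy positions.

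To construct $Q'$ in $\tO(W/\eta)$ time, we preprocess $J$ in $\Oh(|J|)$ time (plus logarithmic factors absorbed into the query bound) using a range top-$K$ structure, for example a Cartesian tree over the gap costs together with prefix sums of $\gapcost$. The top $K$ positions of a range can then be enumerated in $\Oh(K\log K)$ time with a heap over Cartesian-tree nodes.

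\textbf{Correctness.} Fix $T$ and an optimal alignment $\mathcal A$ of $T$ onto $Q$ of cost $c=\ed^w(T,Q)$. We turn $\mathcal A$ into an alignment of $T$ onto $Q'$ as follows. Every pair that uses a light (non-heavy) position $j$ of $Q$ is replaced by deleting the corresponding symbol $a$ of $T$. Every light position that $\mathcal A$ inserted simply disappears, which only lowers the cost.

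For a substitution $(a,Q[j])$ with $j$ light, we pay $\gapcost(a)$ instead of $w(a,Q[j])$. By the triangle inequality and symmetry, $\gapcost(a)\le w(a,Q[j])+\gapcost(Q[j])$. The extra cost is therefore at most $\gamma\coloneqq\gapcost(Q[j])$, which is at most the smallest heavy gap cost. At most $|T|\le W$ such substitutions occur, so the additive loss is at most $W\gamma$.

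\textbf{Main obstacle.} The hardest step is charging $W\gamma$ to $\eta c$. There are $W/\eta$ heavy positions, each of gap cost at least $\gamma$. The alignment $\mathcal A$ matches at most $W$ of them. So at least $(W/\eta)-W$ heavy positions are inserted by $\mathcal A$ at cost at least $\gamma$ each. This gives $c\ge (W/\eta-W)\gamma\ge W\gamma/(2\eta)$, and hence $W\gamma\le 2\eta c$. Rescaling $\eta$ by a constant yields \cref{eq:overview-weighted-representation}.

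If $Q$ has at most $W/\eta$ positions, we take $Q'=Q$ and there is nothing to prove.

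I expect the delicate points to be two. First, the substituted-light-position argument uses the metric properties of $w$, which are available here. Second, handling ties and the case of zero-cost light characters requires some care, but the bound above already covers them.
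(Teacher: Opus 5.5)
The paper gives no proof of this lemma. It imports it as a black box from \cite[Lemma 7.2]{DKK26}, so there is no in-paper argument to compare against.

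Your self-contained proof is correct. Keep the $K$ positions of $Q$ with largest gap cost, and let $\gamma$ be the smallest kept gap cost. Each diagonal pair $(a,Q[j])$ with $j$ light becomes a deletion of $a$. Its extra cost is at most $w(Q[j],\gap)\le\gamma$, because $w(a,\gap)\le w(a,Q[j])+w(Q[j],\gap)$. Inserted light positions simply vanish. Since $\mathcal A$ has at most $|T|\le W$ diagonal pairs, at least $K-W$ kept positions are inserted, each at cost $w(\gap,Q[j])=w(Q[j],\gap)\ge\gamma$ by symmetry. Both metric axioms are therefore genuinely used, which is why the lemma lives in the metric section.

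Two optional tightenings:
\begin{enumerate}
\item The factor-$2$ rescaling of $\eta$ is unnecessary. Suppose $m_L$ light positions are matched. Then at most $W-m_L$ kept positions are matched, so $c\ge(K-W+m_L)\gamma$, while the loss is at most $m_L\gamma$. Since $m_L/(K-W+m_L)\le W/K$ for $0\le m_L\le W\le K$, this already gives $(1+\eta)$ with $K=\lceil W/\eta\rceil$ and any $\eta<1$.
\item A target candidate $Q$ is an arbitrary substring of $J$, and its positions need not form a subtree of the Cartesian tree of $J$. A ``heap over Cartesian-tree nodes'' therefore needs adjusting. The clean implementation is a heap of subintervals keyed by their maxima. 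The maxima come from a range-maximum structure with $\Oh(|J|)$ preprocessing and $\Oh(1)$ queries. Repeatedly pop an interval, output its maximum, and push the two remaining sides. This gives $\Oh(K\log K)=\tO(W/\eta)$ time per candidate, and the prefix sums of gap costs are not needed.
\end{enumerate}
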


\subparagraph{Choose the centers.}
Choose an integer $k\ge1$ (eventually about $N^{8/21}$).
Sample uniformly and independently $\tO(1+M/k)$ target candidates
and construct their short representatives using \cref{lem:weighted-representatives}.
Add the empty string to the sampled representatives.
For each source block $U_i$, choose a center $C_i$ among these representatives
minimizing $\ed^w(U_i,C_i)$.  Retain an
optimal alignment to it, and define the radius
\begin{equation}\label{eq:overview-weighted-radii}
 r_i\coloneqq
 \frac{\ed^w(U_i,C_i)}{1+\eta}.
\end{equation}
The empty center candidate gives $r_i\le\mass(U_i)$, while
$\ed^w(U_i,C_i)=(1+\eta)r_i$ holds by definition.
For every \emph{sampled} target candidate $Q$,
\[
 r_i\le\frac{\ed^w(U_i,Q')}{1+\eta}
       \le\ed^w(U_i,Q).
\]
The same sampled-rank argument as in \cref{sec:unit-overview-centers}
therefore ensures with high probability that at most $k$ target candidates have distance
below $r_i$ from each source block.

\subparagraph{Replace one center substring.}
Choose $\tau \in (0, 1)$ (eventually a sufficiently big multiple of $\eta$) as the fraction of the source-to-center alignment
cost that we want to remove.
Partition $U_i$ into $q_i\coloneqq\lfloor1/\tau\rfloor$
consecutive pieces with lengths differing by at most one.
Split the recorded optimal alignment of $U_i$ onto $C_i$ into $q_i$ parts according to the pieces of $U_i$.

Choose a piece of maximum alignment cost $s_i$, and denote its source and center
substrings by $U_i^\star$ and $C_i^\star$ respectively.  Since the piece costs sum
to $\ed^w(U_i,C_i)$,
\[
 s_i\ge\frac{\ed^w(U_i,C_i)}{q_i}
       \ge\tau\ed^w(U_i,C_i).
\]
Write $C_i=L_iC_i^\star R_i$ and replace the selected substring by
its source partner:
\[
 \widehat C_i\coloneqq L_iU_i^\star R_i.
\]
The copied piece now matches for free, giving an alignment of $U_i$ onto
$\widehat C_i$ of cost $\widehat r_i \coloneqq\ed^w(U_i,C_i)-s_i$. Therefore,
\begin{equation}\label{eq:overview-weighted-repair}
 \ed^w(U_i,\widehat C_i)=\widehat r_i
 \le(1-\tau)\ed^w(U_i,C_i)
 =(1+\eta)(1-\tau)r_i.
\end{equation}
Moreover, $q_i\ge1/(2\tau)$, so the copied substring is short:
\[
 |U_i^\star|\le\lceil |U_i|/q_i\rceil =\Oh(\tau W+1).
\]
For every string $V$, corrected routing satisfies
\begin{equation}\label{eq:overview-weighted-routing}
 \ed^w(U_i,V)\le\widehat r_i+\ed^w(\widehat C_i,V)
 \le\ed^w(U_i,V)+2(1+\eta)(1-\tau)r_i.
\end{equation}

\subsection{Discovering the Cheap Shortcuts}

We again search for pairs of source blocks $U_i$ and target candidates $Q$
whose distance is small relative to $r_i$.
Because these shortcuts are not going to be exact anymore as in \cref{sec:unit-overview-search}, we instead call them \emph{cheap}.
We use the following consequence of the sparse-search analysis
of~\cite{DKK26}.  It bounds the total radius of the comparisons that the
search misses.

\begin{lemma}[Coverage by cheap shortcuts,
  {\cite[Lemmas 8.4--8.6]{DKK26}}]
\label{lem:weighted-sparse-coverage}
Given the source blocks and target candidates of the current rectangle,
radii $r_i \in [0, \mass(U_i)]$, and parameter $\eta \in (0, 1 / 16]$, a randomized
algorithm returns a set $H$ of block--candidate pairs and, for each
$(U_i,Q)\in H$, a label $\ell_i(Q)$ satisfying on every outcome
\[
 \ed^w(U_i,Q)\le\ell_i(Q)\le(1+\eta)\ed^w(U_i,Q).
\]
Fix an arbitrary path $P$ from the left boundary to the right boundary of the exact
candidate graph of cost $\operatorname{cost}(P)$.
Let $B(P)$ consist of the source block indices whose shortcut used by $P$
has its block--candidate pair outside $H$.
With high probability,
\begin{equation}\label{eq:weighted-missed-radii}
 \sum_{i\in B(P)}r_i\le(1+\Oh(\eta))\operatorname{cost}(P).
\end{equation}
If each source block has at most $k$ target candidates at distance below
its radius, then $|H|\le Rk$ and the running time on every
outcome is
\[
 \tO(RkW^2/\eta^4+N_sW/\eta+RM).
\]
\end{lemma}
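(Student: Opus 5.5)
This lemma is a black-box consequence of the sparse-search analysis of \cite{DKK26}, so my plan is to adapt their tree search, in the same form as the unit-cost search of \cref{sec:unit-overview-search}, and reprove the coverage bound in the weighted setting.

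\textbf{Construction.} Build a balanced binary tree over the source blocks of the rectangle and set $\rho_v\coloneqq\sum_{i\in v}r_i$. At each node, sample $\tO(1/\eta)$ blocks with probability proportional to their radius $r_i$ rather than uniformly. Weighted sampling is the natural replacement for the unit-cost step that charged one unit per block, because the radii are now arbitrary reals in $[0,\mass(U_i)]$.

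At the root, compare each sampled block against all $M$ candidates. This costs $\tO(RM)$ once the shared preprocessing of the rectangle is done. Record as anchors the candidates at distance below $r_i$. A child searches near each parent anchor: it takes candidates whose diagonal offset lies within a window of width governed by $\rho_u$. In the weighted case this window must be measured in mass, so I would use the mass analogue of \cref{eq:unit-offset}. Namely, a path of cost $c$ can shift the offset only by gap edges of total cost at most $c$, and we convert between mass and positions using prefix sums of $\gapcost$.

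Tests are done with \cref{thm:dist} on $U_i$ against a segment, and each test returns a $(1+\eta)$-approximate label. Labels are exact, or rounded to powers of $1+\eta$ to index them; this gives the on-every-outcome guarantee on $\ell_i(Q)$. The bound $|H|\le Rk$ follows because each block keeps only candidates below its radius, of which there are at most $k$ by hypothesis.

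\textbf{Coverage.} I would repeat the lucky/useful argument of \cref{sec:unit-overview-guarantee}. Fix $P$ and let $D_v$ be the cost of $P$ inside node $v$. A block is useful if $D_v<\rho_v$ along its whole root path. The non-useful blocks then have total radius at most $\operatorname{cost}(P)$, by the same argument as before: take the highest nodes with $D_v\ge\rho_v$, whose block sets are disjoint and partition the non-useful blocks, and sum. For a node that misses every useful block, radius-weighted sampling shows that with high probability the useful blocks carry at most an $\eta$ fraction of that node's radius $\rho_v$.

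This is where the proof differs from the unit case. Summing over the maximal missing nodes gives useful-but-unlucky radius at most $\eta\sum\rho_v$, and this sum must be charged to $\operatorname{cost}(P)$ to obtain the $(1+\Oh(\eta))$ factor in \cref{eq:weighted-missed-radii}. I would charge it as follows: a missing node $v$ has a parent $u$ that still hits $I$, and that parent has $D_u<\rho_u$. This inequality points the wrong way, so I expect to need the DKK26 refinement. That refinement uses multiple scales of thresholds $\rho_v/(1+\eta)^j$ and accepts candidates below the radius up to $(1+\eta)$ slack. It bounds the missed radius by $\operatorname{cost}(P)$ plus $\eta$ times the radius of non-useful ancestors, hence by $\Oh(\eta)\operatorname{cost}(P)$. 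This charging step is the main obstacle, and I would first try to follow \cite[Lemma 8.5]{DKK26} verbatim.

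\textbf{Running time.} There are $\tO(k/\eta)$ anchors per node. Each spawns windows that hold $\Oh(W/\eta)$-length candidate representatives (\cref{lem:weighted-representatives}), and DIST tests cost $\tO(W\cdot W/\eta)$ per window. With $\tO(1/\eta)$ child samples, this gives $\tO(kW^2/\eta^4)$ per block over the logarithmic depth, so $\tO(RkW^2/\eta^4)$ in total. The remaining two terms are $\tO(N_sW/\eta)$ for the shared preprocessing and $\tO(RM)$ for the root comparisons.
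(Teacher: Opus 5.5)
The paper does not prove this lemma. It states it as a consequence of \cite[Lemmas 8.4--8.6]{DKK26} and uses it as a black box. Your self-contained reconstruction is therefore a different route, but it has a genuine gap in each of its two substantive parts.

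\textbf{The coverage bound.} You stop at \cref{eq:weighted-missed-radii}, which is the heart of the lemma. You defer to \cite[Lemma 8.5]{DKK26} and sketch a multi-scale refinement whose conclusion is never derived. With your radius-proportional sampling no refinement is needed. You are charging $\eta\rho_v$ to the wrong quantity, namely $D_v$ or the parent's $D_u$; charge it to the non-useful radius inside $v$ instead.

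For a highest node $v$ whose sample misses $I$, write $\rho_v=u_v+n_v$, where $u_v$ and $n_v$ are the total radii of its useful and non-useful blocks. The sampling event gives $u_v\le\eta\rho_v$, hence $u_v\le\frac{\eta}{1-\eta}n_v$. These nodes are disjoint, and by your own highest-node argument all non-useful blocks together have radius at most $\operatorname{cost}(P)$. Given that lucky blocks are found, this yields
\[
 \sum_{i\in B(P)}r_i\le\sum_{i\notin I}r_i+\sum_v u_v\le\frac{1}{1-\eta}\operatorname{cost}(P).
\]
The hypothesis $r_i\le\mass(U_i)$ is also needed. It ensures that a block crossed by gap edges alone, which costs at least $\mass(U_i)\ge r_i$, is never useful. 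Hence every useful sampled block contributes a candidate shortcut and thus an anchor.

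\textbf{The search and its running time.} This part does not transfer from the unit case as you describe.

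\begin{enumerate}
\item The weighted offset bound controls only prefix mass. Since gap costs may be arbitrarily small, a mass window of width $2\rho_u$ can span up to $\Theta(|J|)$ target positions and candidates, regardless of $\rho_u$ and $W$. Even in the unit case, with position windows, the per-window cost was $\tO((\rho_u+W)W)$. Your claim that each window is a single $\tO(W^2/\eta)$ DIST test is therefore unjustified.
\item Testing against the representatives of \cref{lem:weighted-representatives} does not help. $Q'$ omits characters of $Q$, so $\ed^w(U_i,Q')$ can be strictly below $\ed^w(U_i,Q)$. Already for $T=\eps$ one has $\ed^w(T,Q')=\mass(Q')<\mass(Q)=\ed^w(T,Q)$ whenever $Q'\ne Q$. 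Such values violate the lower half of the label guarantee, so they are not certifying shortcuts.
\item If you allow $(1+\eta)$-approximate labels, you must reconcile them with the acceptance threshold. Accepting $\ell_i(Q)<r_i$ can miss a lucky block's shortcut. Accepting $\ell_i(Q)<(1+\eta)r_i$ can admit more than $k$ candidates per block, breaking $|H|\le Rk$.
\end{enumerate}

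Bounding the number of candidates tested per anchor with valid labels is exactly what the paper imports from \cite{DKK26}. The same holds for the $\tO(RM)$ root cost, which you assert without giving a mechanism. Without these ingredients your running-time bound is unproved.
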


Apply \cref{lem:weighted-sparse-coverage}. For every returned pair, add the corresponding
cheap shortcut with its returned label to the shortcut graph.

\subsection{Evaluating the Distance in the Shortcut Graph and Recovering the Alignment}

Our shortcut graph will consist of gap edges, routing shortcuts, and cheap shortcuts.
Cheap shortcuts are constructed explicitly while gap edges and routing shortcuts are processed in batches.
We proceed exactly the same way as in \cref{sec:unit-overview-evaluation}.
The only difference is that for a corrected center $\widehat C_i=L_iU_i^\star R_i$ we have $|U_i^\star|=\Oh(\tau W+1)$ rather than $|U_i^\star| \le 1$.
Therefore, we need to process $\Oh(\tau W + 1)$ columns of the alignment graph exactly rather than one.
Alignment reconstruction again works exactly the same way in time $\Oh(NW)$.

\subsection{The Bound on the Approximation Factor}

Consider a shortest path $P$ from an arbitrary vertex on the left boundary to an arbitrary vertex on the right boundary of the exact candidate graph of the current rectangle.
Let $d_\star$ be the cost of the path.
Split this path according to the partition $\set{U_i}_i$ and let $d_i$ be the cost of the $i$-th part.
We have $\sum_i d_i = d_\star$.

We prove that the shortcut graph of the current rectangle with high probability contains a path between the same pair of vertices of weight at most $(3 - 2 \tau + \Oh(\eta)) d_\star$.
It is sufficient to approximate the candidate shortcuts well because gap edges of the exact candidate graph are also present in the shortcut graph.
Put $B\coloneqq B(P)$ as in \cref{lem:weighted-sparse-coverage}.
That lemma gives
\[
 \sum_{i\in B}r_i\le(1+\Oh(\eta))d_\star.
\]

Use cheap shortcuts for the candidate comparisons outside $B$, routing shortcuts
for those in $B$, and keep the gap edges.  Blocks traversed using only gap
edges keep their original costs.  By \cref{eq:overview-weighted-routing},
we obtain a path of weight at most
\[ \sum_{i \notin B} (1 + \eta) d_i + \sum_{i \in B} (d_i + 2 (1 + \eta) (1 - \tau) r_i) \le (3 - 2 \tau + \Oh(\eta)) d_\star. \]
Center correction saves $2 \tau$ while the cheap edge search loses only $\Oh(\eta)$.
Taking the union bound gives this guarantee simultaneously
for all boundary pairs with high probability.

Now consider the whole optimal alignment of $X$ onto $Y$.
By \cref{lem:weighted-decomposition} and $d\le D\le4d$, the union of the
exact candidate graphs contains a route of cost at most $(1+\Oh(\eta))d$.
For rectangles with $|S_s| < W$ or $|J_s| < W$, we compute all distances exactly.
For all other rectangles, apply the preceding approximation of their
exact candidate graphs.  They concatenate into a path of total cost at most
\begin{equation}\label{eq:overview-weighted-budget}
 (3-2\tau+\Oh(\eta))(1+\Oh(\eta))d
 \le(3-2\tau+\Oh(\eta))d.
\end{equation}
Choose $\eta\coloneqq c_0\tau$ for a sufficiently small positive
constant $c_0$.  The bound is then at most
$(3-\tau)d\le3d$.

\subsection{Running Time and the Resulting Theorem}

We assume that all events that were claimed to happen with high probability indeed happen.
Choose the global parameters
\[
 W=\Theta(N^{1/7}),\qquad k=\Theta(N^{8/21}),\qquad
 \tau=\Theta(N^{-2/21}),\qquad \eta\coloneqq c_0\tau,
\]
where $c_0$ is the sufficiently small constant from the approximation analysis.

We first bound the work in one rectangle.
Recall that $R=\tO(N_s/W)$ and $M=\tO(N/(\eta^2W))$.
Center selection and shared preprocessing take
\[
 \tO\!\left(\frac{N_sW}{\eta}(1+M/k)\right)
 =\tO(N_sN^{19/21})
\]
time: there are $\tO(1+M/k)$ sampled representatives, each compared with
every source block and preprocessed against $J$.
By \cref{lem:weighted-sparse-coverage}, the cheap shortcut search takes
\begin{equation}\label{eq:overview-weighted-search-time}
 \tO(RkW^2/\eta^4+N_sW/\eta+RM)
 =\tO(N_sN^{19/21})
\end{equation}
time.

\subparagraph{Shortcut graph propagation and alignment reconstruction.}
Each source block costs $\tO(N_s(\tau W+1))$ time to propagate through its
corrected center.  Over all $R=\tO(N_s/W)$ blocks, boundary updates cost
\[
 \tO(N_sR(\tau W+1))=\tO(\tau N_s^2+N_s^2/W) =\tO(N_sN^{19/21})
\]
time.
Applying the explicit cheap shortcuts costs $\Oh(Rk)$. Alignment reconstruction costs
$\Oh(N_sW)$ in this rectangle.

Sum over all rectangles using $\sum_s N_s=\Oh(N)$.
Rectangles with $\min\{|S_s|,|J_s|\}<W$ cost $\Oh(NW)$ time in total.
The decomposition of \cref{lem:weighted-decomposition} costs
$\tO(N/\eta)$ time in total.  Alignment reconstruction adds
$\Oh(NW)$ time.  The initial $4$-approximation takes $\tOh(N^{8 / 5})$ time. The final time complexity is
\begin{equation}\label{eq:overview-weighted-time}
 \tO\!\left(N^{8/5}+NW+N/\eta+\sum_sN_sN^{19/21}\right)
 =\tO(N^{40/21}).
\end{equation}

This proves the second main theorem.

\weightedmain*

\section{Non-Metric Weight Functions}
\label{sec:nonmetric}

We now allow any finite-valued weight function $w$ with zero diagonal
and positive costs on unequal arguments, without requiring it to be a metric.
The distance is still
the minimum \emph{alignment} cost from \cref{sec:prelim-alignment}.
In particular, a column pays its specified cost directly: it cannot
substitute through intermediate symbols.

Our conclusion is that \emph{no approximation guarantee bounded in
terms of the string lengths is possible in truly subquadratic time}.
The obstruction is the
unrestricted size of the costs: a single cost $B$ can be made larger
than any proposed approximation allowance.
Our lower bounds work even if we relax a single condition of metric weights:
either allowing symmetric weight functions that do not satisfy the triangle inequality or non-symmetric weight functions satisfying the triangle inequality.
The alphabet size matters: three symbols suffice for symmetric hardness,
whereas symmetric binary weights admit a factor-three approximation.
For weights satisfying the triangle inequality without symmetry, two
symbols already suffice for hardness.

\subsection{Unconditional Lower Bound for Large Alphabets}\label{sec:oracle-lb}

\nonmetricoraclemain*

\begin{proof}
We first prove the lower bound for symmetric weights.
Fix an integer $h\ge1$ and set $B\coloneqq5h\alpha(7h+3)^2+1$.
Use pairwise distinct symbols
$\mathtt{\%},\mathtt{\$},a_0,\ldots,a_{h-1},b_0,\ldots,b_{h-1}$ and set
\[
 X\coloneqq\mathtt{\%}^{2h}\mathtt{\$}\,a_0a_1\cdots a_{h-1}\,
          \mathtt{\$}\mathtt{\%}^{2h},\qquad
 Y\coloneqq\mathtt{\$}b_0\mathtt{\$}b_1\cdots\mathtt{\$}
          b_{h-1}\mathtt{\$}.
\]
Thus $|X|=5h+2$, $|Y|=2h+1$, and $N=7h+3$.
The baseline weight function $w_0$ has zero diagonal.  For unequal arguments, give
cost one to every edit containing
$\mathtt{\%}$ and to insertions and deletions of symbols $a_i$.
Every other unequal pair has cost $B$.
For each $i\in\fragmentco{0}{h}$ and $j\in \fragmentco{0}{h}$, define the weight function $w_{ij}$ by
changing only the unordered pair $\{a_i,b_j\}$ from cost $B$ to cost
$1$.  All these functions are symmetric, but violate the triangle
inequality: for example,
$w_0(\mathtt{\$},\gap)=B>2=
 w_0(\mathtt{\$},\mathtt{\%})+w_0(\mathtt{\%},\gap)$.

\subparagraph{The baseline forces an expensive column.}
Suppose an alignment under $w_0$ has no column of cost $B$.
Each of the two dollars in $X$ must match a dollar in $Y$:
every other possible partner costs $B$, since $Y$ contains no percent
symbol.  The matched target dollars 
enclose at least one $b$-symbol.  Its partner lies between the source
dollars and is therefore an $a$-symbol or a gap.  Both possibilities
cost $B$, a contradiction.  Hence $\ed^{w_0}(X,Y)\ge B$.

\subparagraph{\boldmath One cheap pair gives an alignment of cost $5h$.}
Under $w_{ij}$, match the two source dollars to the dollars
surrounding $b_j$, match $a_i$ to $b_j$, and delete every other
$a$-symbol.
Use the leading and trailing percent blocks to match the remaining symbols of $Y$.
Each percent block costs $2h$, the $a$-block costs $h$, and the dollar matches are free.
Thus $\ed^{w_{ij}}(X,Y) \le 5h$.

\subparagraph{Any algorithm must inspect quadratically many weight table entries.}
A successful approximation is at least $B/\alpha(N)$ on the baseline
and at most $5h\alpha(N)$ on every alternative.  Our choice of $B$
ensures that
$5h\alpha(N)<B / \alpha(N)$.
Therefore, a valid algorithm must distinguish $w_0$ from any alternative $w_{ij}$.
Run the algorithm with the same random choices on $w_0$ and $w_{ij}$.
Let $Q_{ij}$ be the event that its baseline execution queries either $w(a_i, b_j)$ or $w(b_j, a_i)$.
Unless this happens, the executions coincide.
The probability of reporting the expensive case is at least $2/3$
on the baseline and at most $1/3$ on the alternative.  Their difference
is at most $\Pr[Q_{ij}]$, so $\Pr[Q_{ij}]\ge1/3$.
There are $h^2$ candidate pairs $(i, j)$.  By linearity of expectation, the baseline execution
makes $\Omega(h^2) = \Omega(N^2)$
queries in expectation.

\subparagraph{Imposing the triangle inequality instead.}
Keep the same strings and choose a directed baseline $w^{\to}_0$ with zero
diagonal and cost $B$ for every other pair except
\[
    w^{\to}_0(\mathtt{\%},z)=1\quad\text{for all $z\ne\mathtt{\%}$},\qquad
    w^{\to}_0(a_i,\gap)=1\quad\text{for all $i \in \fragmentco{0}{h}$}.
\]
For alternative $(i,j)$, let $w^{\to}_{ij}$ agree with $w^{\to}_0$ except that $w^{\to}_{ij}(a_i,b_j)=1$.
Triangle inequality in both cases is easy to verify.
These weight functions are not symmetric, since, e.g.,
$1 = w^{\to}_0(\mathtt{\%},\mathtt{\$}) \neq w^{\to}_0(\mathtt{\$},\mathtt{\%})=B$.
All weights used by an alignment of the constructed strings are
unchanged and the same argument follows.
\end{proof}

\subsection{Symmetric Hardness on Three Symbols}\label{sec:nonmetric-fixed}

The oracle bound uses a growing alphabet to hide a relevant table entry.
For a constant-size alphabet, we instead base hardness on the Orthogonal Vectors Hypothesis.

\begin{hypothesis}[Orthogonal Vectors Hypothesis~\cite{KW19}]
\label{hyp:ovh}
For every $\varepsilon>0$, there is a constant $c>0$ such that no
$\Oh(n^{2-\varepsilon})$-time algorithm decides whether two lists of
$n$ vectors in $\{0,1\}^{\ceil{c\log_2 n}}$ contain an orthogonal pair.
The randomized version of the hypothesis excludes algorithms succeeding with probability
at least $2/3$ on every input.
\end{hypothesis}

In particular, the Strong Exponential Time Hypothesis (SETH) implies OVH~\cite{KW19}.

\subparagraph{The intermediate problem.}
In \emph{subsequence matching with bounded gaps}, the input consists of
nonempty strings $P\in\{0,1\}^k$ and $T\in\{0,1\}^t$, together with
nonnegative integer upper bounds $\ell_0,\ldots,\ell_{k-2}$.
We ask whether there are positions
$0\le q_0<\cdots<q_{k-1}<t$ such that
\[
 T[q_i]=P[i]\quad(0\le i<k),\qquad
 q_{i+1}-q_i-1\le\ell_i\quad(0\le i<k-1).
\]
The following hardness result of Day, Kosche, Manea, and
Schmid~\cite{DKMS22} supplies the instances used by our reductions.

\begin{lemma}[{\cite[Theorem~3.3, Appendix D.3]{DKMS22}}]
\label{lem:bounded-gap-ov}
From two lists of $n$ binary vectors of dimension $d \ge 2$, one can
construct in $\Oh(nd)$ time a binary bounded-gap matching instance
$(P,T,\ell_0,\ldots,\ell_{k-2})$ of total size $\Oh(nd)$, with every
$\ell_i\le84$, that has a valid occurrence if and only if the lists
contain an orthogonal pair.
\end{lemma}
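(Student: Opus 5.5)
Since the statement is quoted from~\cite{DKMS22}, we would reconstruct its reduction from Orthogonal Vectors rather than derive it from the rest of this paper. Let the lists be $a_1,\ldots,a_n$ and $b_1,\ldots,b_n$ in $\{0,1\}^d$. The plan is the standard two-level gadget construction: coordinate gadgets implement an AND over coordinates, and an alignment gadget implements an OR over pairs. All gadgets are binary strings of length $\Oh(d)$ per vector. The bounded gaps $\ell_i\le 84$ play the role that costs play in the Backurs--Indyk reduction: they force a matched pattern block to sit essentially rigidly inside one text block.

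\textbf{Coordinate gadgets.} Encode a source coordinate $x$ by a short pattern word $\pi(x)$ and a target coordinate $y$ by a short text word $\theta(y)$, all of constant length. We want $\pi(x)$ to embed into $\theta(y)$ with constant internal gaps exactly when $x\cdot y=0$. For example, take $\theta(0)$ to contain both $\pi(0)$ and $\pi(1)$ as gap-bounded subsequences, and take $\theta(1)$ to contain only $\pi(0)$. We then concatenate coordinates with a fixed separator word, say a long run $1^{c}$ framed by zeros. The separator must be such that, under the gap bounds, the $t$-th coordinate of the pattern block is forced to be matched within the $t$-th coordinate region of the text block.

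\textbf{Vector gadgets.} Define the vector gadgets $VA(a)$ and $VB(b)$ as the framed concatenations of the coordinate gadgets. The key step is to prove that $VA(a)$ occurs in $VB(b)$ with all gaps bounded if and only if $a\perp b$. The "if" direction is a direct construction. The "only if" direction is a synchronization argument: the bounded gaps together with the framing words prevent any drift across coordinate boundaries. This is where the explicit constant, such as $84$, comes from.

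\textbf{OR over pairs (main obstacle).} The pattern will be $P=VA(a_1)\,\sigma\,VA(a_2)\,\sigma\cdots VA(a_n)$, and the text will interleave the $VB(b_j)$ with a universal block $U$ that accepts every $VA(a)$. The difficulty is that gaps between consecutive pattern symbols are bounded, so each pattern block must land on an adjacent text slot. We must therefore design each text slot to offer a choice between the real block $VB(b_j)$ and the universal block $U$, while also ensuring that at least one pattern block is forced onto a real block. Naively letting every block use $U$ makes every instance a yes-instance.

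What we would try first is to pad the text as $U^{n}\,S_1\cdots S_n\,U^{n}$, with a distinguished marker placed in the middle of $P$ that can only be matched at a slot $S_j$ which is "real-only". Each slot $S_j$ then contains $VB(b_j)$ immediately followed by $U$, joined by a connector short enough, within $84$, to permit either choice for the other blocks. Correctness then has two directions:
\begin{itemize}
\item In a yes-instance, align the marked pattern block with the orthogonal $b_j$ and let every other pattern block fall into universal blocks.
\item Conversely, the marker forces one genuine orthogonality check.
\end{itemize}
Finally, sizes are $\Oh(nd)$ because each gadget has length $\Oh(d)$ and there are $\Oh(n)$ of them, and construction time is linear in the output.
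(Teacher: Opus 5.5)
Your proposal is a plan rather than a proof, and the step you flag as the main obstacle, the OR over pairs, fails as sketched. There are two separate problems.

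\textbf{The choice inside a slot is not available.} In a slot $S_j=VB(b_j)\,\kappa\,U$, a pattern block would have to use one of the two gadgets and skip the other. Every gap is at most $84$, but each vector gadget has length $\Omega(d)$ with $d=\lceil c\log_2 n\rceil$ unbounded, so no gap can jump over a whole gadget. You also need rigidity for the AND part: each $VA(a)$ embeds inside a single text gadget. Together, these force consecutive pattern blocks onto consecutive text gadgets. Inside $S_1\cdots S_n$ those gadgets alternate between real and universal. So roughly every other block is checked against some $b_j$, which is an AND rather than an OR, and your yes-direction (``every other pattern block falls into universal blocks'') cannot be realized.

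\textbf{The marker fixes which vector is checked.} A marker at a fixed position of $P$ determines which block $VA(a_{i_0})$ receives the genuine check. Even with a working choice mechanism, you would only decide whether $a_{i_0}$ is orthogonal to some $b_j$. An orthogonal pair $(a_i,b_j)$ with $i\ne i_0$ would go undetected.

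What is missing is:
\begin{enumerate}
\item a way to offer the universal alternative without skipping more than $\Oh(1)$ text characters, for instance by merging universal and real coordinate words inside each slot;
\item a mechanism in which the \emph{alignment}, not the pattern, selects the checked block, e.g., a monotone phase or offset that must differ between the two ends of the occurrence and can change only while a whole block sits in a real gadget;
\item binary separators and markers that provably cannot be embedded out of phase under subsequence semantics with only upper gap bounds.
\end{enumerate}
None of these is actually constructed. Neither are the coordinate words $\pi,\theta$, nor the rigidity argument that is supposed to produce the constant $84$; they are only described by the properties you want them to have.

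The paper gives no proof of its own here; it imports the lemma from \cite{DKMS22} (Theorem~3.3 with the binary construction of Appendix~D.3). You should either cite it in the same way or supply these gadgets explicitly and verify both directions.
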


\subparagraph{One optional symbol encodes the gaps.}
Introduce a symbol $\star\notin\{0,1,\gap\}$.  It will either consume
one arbitrary target bit or be deleted, at cost one in either case.
Ordinary source bits will have to match identical target bits.

\begin{lemma}\label{lem:symmetric-gap-alignment}
Given a binary bounded-gap instance with $\ell_i\le84$ and an integer
$B>1$, one can construct in $\Oh(k+t)$ time strings $X,Y$ of total
length $\Oh(k+t)$ over $\{0,1,\star\}$ and a symmetric weight function $w$ with weights in
$\{0,1,B\}$ such that
\[
 \begin{aligned}
 \text{a valid occurrence exists}&\quad\Longrightarrow\quad
       \ed^w(X,Y)\le |X|,\\
 \text{no valid occurrence exists}&\quad\Longrightarrow\quad
       \ed^w(X,Y)\ge B.
 \end{aligned}
\]
\end{lemma}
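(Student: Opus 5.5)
The plan is a direct gadget reduction. The source string $X$ spells out the pattern $P$, with a run of $\ell_i$ optional $\star$-symbols standing in for each gap. Long $\star$-runs are added at both ends so that unused target bits can always be absorbed. Every forbidden operation gets cost $B$.

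\textbf{Construction.} Concretely, I would set
\[
 X\coloneqq \star^{t}\,P[0]\,\star^{\ell_0}\,P[1]\,\star^{\ell_1}\cdots\star^{\ell_{k-2}}\,P[k-1]\,\star^{t},\qquad Y\coloneqq T.
\]
The symmetric weight function $w$ on $\{0,1,\star,\gap\}$ is defined as follows.
\begin{itemize}
\item Every pair involving $\star$ and a different symbol has cost $1$, that is, $w(\star,0)=w(\star,1)=w(\star,\gap)=1$, together with the symmetric counterparts.
\item Every other unequal pair has cost $B$, that is, $w(0,1)$, $w(0,\gap)$, $w(1,\gap)$ and their symmetric counterparts.
\end{itemize}
Because $\ell_i\le84$, the total length is $k+\sum_i\ell_i+2t+t=\Oh(k+t)$. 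The construction clearly takes linear time, and $w$ is symmetric with weights in $\{0,1,B\}$.

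\textbf{Completeness.} Suppose $q_0<\cdots<q_{k-1}$ is a valid occurrence.
\begin{itemize}
\item Match each $P[i]$ with $T[q_i]$, which costs $0$.
\item The $q_0$ target bits before $q_0$ are aligned with the last $q_0\le t$ stars of the leading run, and the remaining leading stars are deleted.
\item Symmetrically, the bits after $q_{k-1}$ are absorbed by the trailing run.
\item Between $q_i$ and $q_{i+1}$ there are $q_{i+1}-q_i-1\le\ell_i$ target bits. These are aligned with stars of the $i$-th run, and the surplus stars are deleted.
\end{itemize}
Every column in this alignment is either a free bit match or a column containing exactly one star at cost $1$. Hence $\ed^w(X,Y)$ is at most the number of stars, which is at most $|X|$.

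\textbf{Soundness.} Suppose instead that some alignment has cost below $B$. Then it contains no $B$-column, which forces the following.
\begin{itemize}
\item Each source bit $P[i]$ is matched to an equal target bit $T[q_i]$, and the $q_i$ are strictly increasing because alignments are monotone.
\item No target bit is inserted, so every target bit is paired with some source symbol. Any target bit strictly between $q_i$ and $q_{i+1}$ must be paired with a source symbol strictly between $P[i]$ and $P[i+1]$. The only such symbols are the $\ell_i$ stars of the $i$-th run, and each star pairs with at most one target bit.
\end{itemize}
Therefore $q_{i+1}-q_i-1\le\ell_i$ for every $i$, so $(q_i)_i$ is a valid occurrence. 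Taking the contrapositive, if no valid occurrence exists, then $\ed^w(X,Y)\ge B$.

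The only delicate point is making sure that target bits outside the occurrence never force a $B$-cost column in the completeness direction. The flanking $\star^t$ runs handle this, and their length $t$ keeps the instance linear-sized.
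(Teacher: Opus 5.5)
Your proposal is correct and matches the paper's proof essentially verbatim: the same strings $X=\star^t P[0]\star^{\ell_0}\cdots\star^{\ell_{k-2}}P[k-1]\star^t$ and $Y=T$, the same symmetric weights (cost $1$ for any unequal pair involving $\star$, $B$ otherwise), and the same completeness (cost equals the number of stars) and soundness (no insertions, bits match equal bits, gaps absorbed only by the $\ell_i$ intervening stars) arguments.
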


\begin{proof}
Set
\[
 X\coloneqq\star^t P[0]\star^{\ell_0}P[1]\cdots
          \star^{\ell_{k-2}}P[k-1]\star^t,
 \qquad Y\coloneqq T.
\]
Give the weight function zero diagonal and set
\[
 w(\star,z)=w(z,\star)\coloneqq1
 \quad\text{for }z\in\{0,1,\gap\},
\]
with cost $B$ for every other unequal pair.
A valid occurrence gives an alignment by matching the ordinary source
bits to their occurrence positions, using the intervening stars for
skipped target bits, and deleting unused stars. The cost is 
$2t+\sum_{i=0}^{k-2}\ell_i\le |X|$.

Conversely, an alignment without a $B$-cost column has no target
insertions, and every ordinary source bit matches an identical target
bit.  At most $\ell_i$ target bits can be consumed by the stars between
$P[i]$ and $P[i+1]$.  The ordinary matches therefore give a valid
bounded-gap occurrence.  If no occurrence exists, every alignment
contains a $B$-cost column.  Finally,
$|X|+|Y|=3t+k+\sum_i\ell_i=\Oh(k+t)$.
\end{proof}

We are now ready to prove \cref{thm:nonmetric-fixed}.

\nonmetricfixedmain*

\begin{proof}
Suppose the claimed approximation algorithm exists for some
$\delta \in (0, 1)$.  Choose the constant $c$ in
\cref{hyp:ovh} for $\varepsilon\coloneqq\delta/2$.
Given two lists of $n$ vectors of dimension $\ceil{c\log_2n}$, apply
\cref{lem:bounded-gap-ov,lem:symmetric-gap-alignment}.  The strings do
not depend on $B$ and have total length $N=\Oh(n\log n)$.
Compute $\alpha(N)$ and choose
\[
 B\coloneqq(\alpha(N)^2+1)N.
\]
Construct the constant-size weight function table, call the approximation algorithm,
and answer yes exactly when $\alpha(N)\widehat D<B$.
On a successful approximation, a yes-instance satisfies
\[
 \alpha(N)\widehat D\le\alpha(N)^2|X|<B,
\]
whereas a no-instance satisfies
$\alpha(N)\widehat D\ge\ed^w(X,Y)\ge B$.
The running time is
\[
 \Oh\bigl((n\log n)^{2-\delta}\bigr)+n^{1+o(1)}
   =\Oh(n^{2-\delta/2}),
\]
contradicting the Orthogonal Vectors Hypothesis.
\end{proof}

\subsection{An Approximation for Symmetric Weights over a Binary Alphabet}
\label{sec:binary-symmetric}

Ideally, we would want to prove \cref{thm:nonmetric-fixed} for binary alphabets.
One may ask whether the alphabet size $3$ is a limitation of our specific construction, and a more intricate argument may show that arbitrary approximation for binary alphabets is also OVH-hard.
Interestingly, it is not the case.
In this section we show that weighted edit distance over a binary alphabet with a symmetric weight function can be reduced to the metric case and thus can be approximated in truly subquadratic time.

For a binary alphabet, taking the metric closure of the weight function changes the edit
distance by an exactly computable, nonnegative amount.  This sign is
what allows a multiplicative approximation to transfer.
Write $\#_s(S)$ for the number of occurrences of a symbol $s$ in a string $S$.

\begin{lemma}\label{lem:binary-metric-correction}
Let $w$ be a symmetric weight function on $\{0,1,\gap\}$, and put
\[
 a\coloneqq w(0,\gap),\qquad b\coloneqq w(1,\gap),\qquad
 c\coloneqq w(0,1).
\]
Define the metric weight function $v$ by its zero diagonal and the costs
\[
 \begin{aligned}
 v(0,\gap) = v(\gap, 0)&= a'\coloneqq\min\{a,b+c\},\\
 v(1,\gap) = v(\gap, 1)&= b'\coloneqq\min\{b,a+c\},\\
 v(0,1) = v(1,0)&= c'\coloneqq\min\{c,a+b\}.
 \end{aligned}
\]
For arbitrary strings $X,Y\in\{0,1\}^*$ and $s\in\{0,1\}$, set
\[
 E_s\coloneqq\max\{0,\#_s(X)-|Y|,\#_s(Y)-|X|\}.
\]
Then
\begin{equation}\label{eq:binary-metric-correction}
 \ed^w(X,Y)=\ed^v(X,Y)+(a-a')E_0+(b-b')E_1.
\end{equation}
Moreover, given an arbitrary alignment $\mathcal A$ of $X$ onto $Y$, one can
construct in $\Oh(|X| + |Y|)$ time another alignment $\mathcal A'$ of $X$ onto $Y$ satisfying
\[
 \operatorname{cost}_w(\mathcal A')
 \le\operatorname{cost}_v(\mathcal A)+(a-a')E_0+(b-b')E_1.
\]
\end{lemma}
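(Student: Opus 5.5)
The plan is to split into cases according to which triangle inequality of $w$ fails, using that over $\{0,1,\gap\}$ at most one can fail. For instance, $a>b+c$ and $c>a+b$ together give $a>a+2b$, which is impossible; the other pairs are analogous. So at most one of $a',b',c'$ differs from its original value, and $v\le w$ pointwise. For any alignment $\mathcal A$, the costs $\operatorname{cost}_w(\mathcal A)$ and $\operatorname{cost}_v(\mathcal A)$ then differ only on the columns of the one reduced type. By the symmetry $0\leftrightarrow1$, two cases remain: $c>a+b$, and $a>b+c$.

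\textbf{Case $c>a+b$.} Here $a'=a$ and $b'=b$, so the claimed correction is zero. The bound $\ed^v\le\ed^w$ is immediate from $v\le w$. For the converse, and for the constructive part, I replace every substitution column $(0,1)$ or $(1,0)$ of $\mathcal A$ by a deletion column followed by an insertion column. This remains a valid alignment. Its $w$-cost equals the $v$-cost $a+b=c'$ of the original column, so $\operatorname{cost}_w(\mathcal A')=\operatorname{cost}_v(\mathcal A)$. This takes linear time.

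\textbf{Case $a>b+c$, so $a'=b+c$ and $b'=b$, $c'=c$.} Let $g_0(\mathcal A)$ be the number of gap columns containing a $0$, that is, deleted $0$s of $X$ plus inserted $0$s of $Y$. Then $\operatorname{cost}_w(\mathcal A)=\operatorname{cost}_v(\mathcal A)+(a-a')g_0(\mathcal A)$ exactly.

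\emph{Lower bound.} At most $|Y|$ characters of $X$ are paired with characters of $Y$, so at least $\#_0(X)-|Y|$ zeros of $X$ are deleted. Symmetrically, at least $\#_0(Y)-|X|$ zeros of $Y$ are inserted. Hence $g_0\ge E_0$ for every alignment, and minimizing over alignments gives $\ed^w\ge\ed^v+(a-a')E_0$.

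\emph{Upper bound and constructive part.} It suffices to convert any $\mathcal A$ in linear time into $\mathcal A'$ with $\operatorname{cost}_v(\mathcal A')\le\operatorname{cost}_v(\mathcal A)$ and $g_0(\mathcal A')\le E_0$. Applied to a $v$-optimal alignment, this also yields the equality \eqref{eq:binary-metric-correction}.

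\textbf{Main obstacle: the normalization.} Suppose $g_0>E_0$, and say there are deleted $0$s; insertions are symmetric. The number of deleted $0$s then exceeds $\#_0(X)-|Y|$. So some character $y$ of $Y$ is not paired with a $0$ of $X$: it is either inserted or paired with a $1$. I would repair this by a local exchange. Take a deleted $0$ and the nearest such $y$, so that only ``$0$ paired with something'' columns lie between them. Then shift the pairing along this stretch by one position: the deleted $0$ gets a partner, and the chain ends either by absorbing the insertion or by releasing a $1$, which becomes a deletion of cost $b$ instead of $a$.

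I would check with the metric properties of $v$ (in particular $a'=b+c$) that the $v$-cost does not increase under this shift. Each step strictly decreases $g_0$. Implementing all exchanges in one left-to-right sweep with a stack of pending unmatched $0$-gaps should give the $\Oh(|X|+|Y|)$ bound.

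I expect this exchange argument to be the delicate step. A naive shift may raise the substitution costs of the intermediate columns. If it fails, my fallback is to bound the change by pairing the shifted columns telescopically, using that every column in between involves a $0$ in $X$.
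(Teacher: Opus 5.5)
Your proof is correct, but it takes a different route from the paper. The case split, the replacement of substitutions when $c>a+b$, and the counting bound $g_0\ge E_0$ all coincide with the paper; the difference is the exchange used for the upper bound when $a>b+c$.

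\textbf{The paper's route.} The paper first swaps an unmatched source zero with the nearest matched source one. Re-pairing in order changes the symbol at only one rank, so the cost changes by at most $c-(a'-b)=0$. Once every selected source symbol is a zero, it adds pairs of unmatched positions, each of which strictly lowers the cost. The equality is derived from an extremal choice (a $v$-optimal alignment minimizing unmatched zeros), and the conversion is described separately.

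\textbf{Your route.} You use a single augmenting shift keyed on a target position $y$ not matched to a source zero. The case of $y$ matched to a one plays the role of the paper's swap, and the case of $y$ inserted plays the role of its added pair. With $g_0$ as a strictly decreasing potential, the equality and the linear-time conversion come out of one procedure, and no extremal argument is needed.

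\textbf{The step you flagged is harmless.} Every matched source symbol strictly inside the stretch is a zero. Re-pairing therefore only permutes which zero meets which $y_t$, and the costs $v(0,y_t)$ are unchanged. The entire change sits at the two ends. It is $v(0,y)-a'-v(\gap,y)\le0$ if $y$ was inserted, and $v(0,y)+b-a'-v(1,y)=v(0,y)-v(1,y)-c\le0$ if $y$ was matched to a one.

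\textbf{Two corrections.} First, choosing the nearest pair does not exclude deleted ones strictly between the two ends. Your claim that only zero-matched columns lie there is therefore false, but harmless: those ones stay deleted and monotonicity is preserved.

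Second, for the $\Oh(|X|+|Y|)$ bound you must not re-pair the stretch explicitly. With a stack, the resolved stretches nest. For example, $k$ deleted zeros, then $m$ zero-matched columns, then $k$ insertions already cost $\Theta(km)$. Keep only selected-position flags and pair in order at the end, as the paper does. Each shift then flips $\Oh(1)$ flags: select the deleted zero, and either select $y$ or deselect the one matched to it.

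A single left-to-right stack of pending deleted zeros also misses positions $y$ lying to their left. Add a right-to-left pass, or let the stack hold pending positions $y$ as well. Then run the mirrored pass for inserted zeros. One pass per side suffices: the first pass either removes all deleted zeros or leaves no insertions at all, in which case $g_0=\#_0(X)-|Y|\le E_0$. The second pass creates no deleted zeros.
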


\begin{proof}
The metric $v$ is the shortest-path metric of the three-vertex undirected graph
with edge lengths $a,b,c$.  These lengths are positive, and at most
one of them can exceed the sum of the other two.

If $c>a+b$, replacing every substitution by a deletion followed by an
insertion shows that capping $c$ at $a+b$ preserves edit distance.
The resulting weights already form a metric, and both correction
terms vanish.  It remains, up to exchanging the two symbols, to
consider $a>b+c$.  In this case, only the zero-gap weight changes:
$a'=b+c$, $b'=b$, and $c'=c$.

Every alignment has at least $E_0$ zero-gap columns.
We prove that some $v$-optimal alignment has exactly $E_0$ such columns.
Represent an alignment by the monotone matching of its nongap source
and target positions.
Call these positions \emph{selected}. All other positions are unmatched.  Among the
$v$-optimal matchings, choose one minimizing the total number of
unmatched zeros.

Suppose an unmatched source zero coexists with a matched source one.
Choose a nearest matched one, so that every matched source symbol
between these positions is zero.  Replace that matched one by the
unmatched zero in the subsequence of selected source positions, and pair the two
selected subsequences in order.  Only one symbol of the selected
source subsequence changes, from one to zero.  Its pairing weight
increases by at most $c$, while its gap weight decreases by
$a'-b=c$.  The new matching is therefore still optimal and has one
fewer unmatched zero, a contradiction.  By symmetry the same argument
applies on the target side.

Consequently, if a source zero is unmatched, every matched source
symbol is zero.  If any target position with symbol $z$ is also
unmatched, add it and an unmatched source zero to the selected
subsequences and pair them in order.  Every previously matched target
symbol still pairs with a zero, so its cost is unchanged.  The added
pair decreases the total cost by
\[
 a'+v(\gap,z)-v(0,z)
 =\begin{cases}2a',&z=0,\\2b,&z=1,\end{cases}
\]
which is strictly positive.  Thus every target position is matched,
and exactly $\#_0(X)-|Y|$ source zeros are unmatched.  The reverse
argument handles unmatched target zeros.  If neither side has an
unmatched zero, $E_0=0$.  These cases prove that the chosen optimum
has exactly $E_0$ zero-gap columns.

The $w$-cost of any alignment is its $v$-cost plus $a-a'$ for each
zero-gap column.  The lower bound of $E_0$ on their number and the
$v$-optimal alignment attaining this bound give
$\ed^w(X,Y)=\ed^v(X,Y)+(a-a')E_0$, as required.
Here $(b - b')E_1 = 0$.

To make the conversion constructive, start from any alignment.
If $w$ is metric, keep it; if $c>a+b$, replace its substitutions
by deletions and insertions as above.  For $a>b+c$, the exchanges
above do not increase the $v$-cost of an arbitrary alignment either.
On each string, scan left to right, maintaining a stack of matched
ones.  Whenever an unmatched zero has such a one to its left,
exchange it with the nearest one, at the top of the stack.  Repeat
with a right-to-left scan.  Afterwards, an unmatched zero cannot
coexist with a matched one on the same side.
Next, add pairs of unmatched zeros until at most one side has any
left.  Pair its remaining unmatched zeros with unmatched positions
on the other side until either supply is exhausted.  The side
supplying zeros has only zeros selected, so each addition decreases
the $v$-cost by the quantity displayed above.  The preceding argument
then shows that exactly $E_0$ zero-gap columns remain.  Hence the
resulting $w$-cost is at most the original $v$-cost plus
$(a-a')E_0$.  The case $b>a+c$ is symmetric.
Maintain only the selected-position flags during these operations
and pair the selected positions in order once at the end.
The algorithm takes $\Oh(N)$ time in total.
\end{proof}

We are now ready to prove \cref{thm:binary-symmetric}.

\binarysymmetricmain*

\begin{proof}
Run \cref{thm:weighted} with the weight function $v$ of \cref{lem:binary-metric-correction} and let $\mathcal A_v$ be the
alignment it returns, of $v$-cost $\widehat D_v$.  Use
\cref{lem:binary-metric-correction} to convert it in $\Oh(N)$ time
into an alignment $\mathcal A$ of $w$-cost
$\widehat D\coloneqq\operatorname{cost}_w(\mathcal A)$, and return
$\mathcal A$.  Write $K\coloneqq(a-a')E_0+(b-b')E_1\ge0$.
On the success event of that algorithm, \cref{lem:binary-metric-correction}
gives
\[
    \widehat D \le\widehat D_v+K\le3\ed^v(X,Y)+K\le3\ed^w(X,Y).
\]
\end{proof}

\subsection{Binary Alphabet Hardness with the Triangle Inequality}
\label{sec:binary-quasimetric}

Without symmetry, we show that even for binary alphabets the problem is hard.
One of the two symbols can act as the gap symbol $\star$ in $X$.

\begin{lemma}\label{lem:binary-directed-gap-alignment}
Given a binary bounded-gap instance with $\ell_i\le84$ and an integer
$B>1$, one can construct in $\Oh(k+t)$ time strings $X,Y\in\{0,1\}^*$
of total length $\Oh(k+t)$ and a weight function $w^\to$ satisfying
the triangle inequality, with all weights in $\{0,1,B\}$, such that
\[
 \begin{aligned}
 \text{a valid occurrence exists}&\quad\Longrightarrow\quad
       \ed^{w^\to}(X,Y)\le |X|,\\
 \text{no valid occurrence exists}&\quad\Longrightarrow\quad
       \ed^{w^\to}(X,Y)\ge B.
 \end{aligned}
\]
\end{lemma}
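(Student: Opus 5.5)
The plan mirrors \cref{lem:symmetric-gap-alignment}, with $\mathtt 1$ in the source playing the role of $\star$. Asymmetry makes this possible. Set $w^\to(\mathtt 1,\mathtt 0)=w^\to(\mathtt 1,\gap)=1$, so a source one can consume any target symbol or be deleted at cost one. Set $w^\to(\gap,z)=B$ for every $z$, so no target insertions are allowed. Set $w^\to(\mathtt 0,\mathtt 1)=w^\to(\mathtt 0,\gap)=B$, so a source zero must match a target zero. Every directed triangle inequality $w^\to(x,z)\le w^\to(x,y)+w^\to(y,z)$ is checked by a short case analysis. The only sums that can be smaller than $B$ route through $\mathtt 1$ as the first argument, and those target values are themselves $1$. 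Thus $w^\to$ is a quasimetric with weights in $\{0,1,B\}$.

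The literal bits of $P$ can no longer be written directly, since a source $\mathtt 1$ is now the wildcard. I would therefore encode every bit of $P$ and of $T$ by a fixed-length gadget. A target bit $b$ becomes a block $\beta_b$, and the blocks are concatenated into $Y$. A pattern bit $b$ becomes a source gadget $\pi_b$ whose zeros are the rigid part. Each gap bound $\ell_i$ is realised by $|\beta|\cdot\ell_i$ optional source ones, and $\mathtt 1^{|Y|}$ is added at both ends, exactly as the $\star^{\ell_i}$ and $\star^t$ padding before. In the yes-case the alignment matches each $\pi_{P[i]}$ to $\beta_{T[q_i]}$. Its wildcard ones consume skipped blocks, and unused ones are deleted, so every column costs at most one and the total cost is at most $|X|$.

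For the no-case, fix an alignment with no $B$-column. Then every target symbol is consumed by a source symbol, and every source zero sits on a target zero. The gadget must be chosen so that the zero pattern of $\pi_b$ can be embedded in order into target zeros only as the zero pattern of a single block $\beta_b$ with the same bit, and only starting at a block boundary. Then the order of source gadgets forces increasing target block indices $q_0<\dots<q_{k-1}$ with $T[q_i]=P[i]$. Each separating run of $|\beta|\ell_i$ ones consumes at most $\ell_i$ whole blocks, which gives a valid bounded-gap occurrence. This contradicts the assumption, so $\ed^{w^\to}(X,Y)\ge B$. The lengths are $\Oh(k+t)$ because $\ell_i\le 84$ and the gadgets have constant size.

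The main obstacle is the gadget design. Because target ones may be eaten by any source one, only the \emph{zeros} of $Y$ carry rigid information, and patterns of zeros are monotone under subsequence embedding. A naive run-length code such as $\beta_b=\mathtt 1\mathtt 0^{1+b}$ fails, since the single zero of $\pi_0$ also embeds into the double run of $\beta_1$. My first attempt would be to encode each bit by zero runs that are incomparable under embedding. One option is $\beta_b=\mathtt 1\mathtt1\,\mathtt 0^{1+b}\,\mathtt 1\,\mathtt 0^{2-b}$, with $\pi_b$ the same word. Then each source gadget must match its target block exactly, and the two separating target ones per block prevent splitting a gadget's zeros across blocks. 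I would verify this incomparability claim by a case check on the finitely many embeddings. If it fails, I would lengthen the gadgets with additional separator runs until the check goes through.
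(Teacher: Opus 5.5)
Your proposal is correct and is essentially the paper's construction. Both use the same quasimetric $w^\to$ with source ones acting as wildcards, padding $1^{|Y|}$ at both ends, gap runs of ones proportional to $\ell_i$, and constant-size codewords whose zeros are forced into place; only the gadget differs, since the paper uses $C_a=0^4\,1\,a\,(1-a)\,1\,0^3$ with a separating $1$ in $Y$ and gap runs $1^{1+12\ell_i}$, so that the unique $0^4$ runs pin codeword starts and the eight zeros force identity.

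Your separator-free $\beta_b=11\,0^{1+b}\,1\,0^{2-b}$ does pass the check you deferred, but only if you run it using locality rather than plain subsequence embedding of zeros. With no insertions, adjacent source zeros land on adjacent target zeros, a single source one between two zeros lets at most one target character separate their images, and the leading $11$ of every block rules out both $000$ and boundary-crossing patterns. This pins the zeros of $\pi_0$ to positions $2,4,5$ of some $\beta_0$ and those of $\pi_1$ to positions $2,3,5$ of some $\beta_1$. A gadget's two leading ones may be deleted, so it need not consume its block exactly as you claim, but that only strengthens your count $6r\le 6\ell_i$.
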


\begin{proof}
Give $w^\to$ zero diagonal and set
\[
 w^\to(1,0)=w^\to(1,\gap)\coloneqq1,
\]
with cost $B$ for every other unequal pair. It is easy to verify that this weight function satisfies the triangle inequality.

An alignment avoiding cost $B$ has no target insertions, every source
zero matches a target zero, and a source one may match either target
bit or be deleted.
For $a\in\{0,1\}$, define
\[
 C_a\coloneqq 0^4\,1\,a\,(1-a)\,1\,0^3.
\]
Set
\[
 \begin{aligned}
 Y&\coloneqq C_{T[0]}\,1\,C_{T[1]}\,1\cdots1\,C_{T[t-1]},\\
 X&\coloneqq 1^{|Y|}\,C_{P[0]}\,1^{1+12\ell_0}\,C_{P[1]}\cdots
        1^{1+12\ell_{k-2}}\,C_{P[k-1]}\,1^{|Y|}.
 \end{aligned}
\]
The strings satisfy $|X|+|Y|=\Oh(k+t)$.
We first show that every source codeword in an alignment avoiding $B$
must match an identical target codeword in full.  Its initial $0^4$ block
must match four consecutive target zeros.  The only target zero runs
of length at least four are the initial runs of target codewords,
and they have length exactly four.  Hence the source codeword starts
at a target codeword boundary.
Its terminal $0^3$ block must match a later target zero run of length at
least three. The
same target codeword's terminal $0^3$ is the only possible run:
reaching even the first three zeros of the next codeword would force an insertion.
Thus, the source codeword consumes precisely the eleven characters
of that target codeword, with no deletions.  All eight source zeros have to
match zeros.  As there are exactly eight target zeros too, no source
one can substitute to zero.  The codewords must therefore be identical.

Between target codewords that skip $r$ original text symbols, there
are exactly $1+12r$ target characters.  The intervening source block
of $1+12\ell_i$ ones can consume them if and only if $r\le\ell_i$.
The outside blocks can consume any remaining target prefix or suffix.
Consequently, a valid bounded-gap occurrence gives an alignment with
no $B$-cost column, whose cost is at most $|X|$: each source position
costs at most one and there are no insertions.  Conversely, every
alignment without a $B$-cost column determines identical matched
codewords with the required gap bounds, hence a valid occurrence.
\end{proof}

We are now ready to prove \cref{thm:binary-quasimetric-hardness}.

\binaryquasimetrichardmain*

\begin{proof}
Suppose the claimed algorithm exists for some $\delta \in (0, 1)$. Choose $c$ in \cref{hyp:ovh} for
$\varepsilon\coloneqq\delta/2$ and reduce the resulting OV instance
using \cref{lem:bounded-gap-ov,lem:binary-directed-gap-alignment}.
The strings have total length $N=\Oh(n\log n)$ and do not depend on
$B$.  Compute $\alpha(N)$, set $B\coloneqq(\alpha(N)^2+1)N$, and answer
yes exactly when the approximation satisfies $\alpha(N)\widehat D<B$.
On a successful approximation, a yes-instance gives
$\alpha(N)\widehat D\le\alpha(N)^2|X|<B$, and a no-instance gives
$\alpha(N)\widehat D\ge\ed^{w^\to}(X,Y)\ge B$.
The total running time
is $\Oh(n^{2-\delta/2})$. This contradicts the Orthogonal Vectors Hypothesis.
\end{proof}

In this case, the choice of the binary alphabet is easy to justify:
weighted edit distance over a unary alphabet with an arbitrary weight function is trivially solvable exactly in linear time.

\paragraph*{Acknowledgements and AI disclosure.}
The author thanks Tomasz Kociumaka for helpful discussions.

The project began as an attempt by the author to understand the metric $(3+\eps)$-approximation algorithm of Mader, Tavasoli, and Wang~\cite{MTW26}. The author interactively prompted an LLM to understand and try to simplify that algorithm. Inspired by short target candidate representations similar to \cref{lem:weighted-representatives} (an idea originally attributed to Kuszmaul~\cite{Kusz19}), the author suggested the idea of center corrections. Given this idea, an LLM autonomously produced a factor-$3$ approximation algorithm for metric edit distance in $\tOh(N^{2-1/32})$ time.

The initial proof was considerably more complicated than necessary. The author distilled its main ideas and applied them to obtain a much simpler factor-$3$ algorithm for unit-cost edit distance, inspired by Andoni~\cite{Andoni20} and presented in \cref{sec:overview-unit}. Based on this unit-cost proof, an LLM produced a simpler and faster metric-weighted algorithm. Through further interactive prompting, this algorithm was simplified into the form presented in \cref{sec:overview-weighted}. During this process, the author chose to switch and follow the proof of Das, Kipouridis, and Kociumaka~\cite{DKK26} instead of that of Mader, Tavasoli, and Wang~\cite{MTW26}, because it offered more opportunities for black-box reductions.

The author subsequently proved the unconditional oracle lower bound of \cref{sec:oracle-lb} and a fine-grained lower bound for alphabets of size four based on the regular-expression membership lower bounds of Backurs and Indyk~\cite{BI16}. (Frankly, the proofs of both results are easy exercises.) Based on these findings, an LLM proposed the more convenient intermediate problem of Day, Kosche, Manea, and Schmid~\cite{DKMS22} and used it to reduce the alphabet size to two in the quasimetric case and three in the symmetric case.
These are the results presented in \cref{sec:nonmetric-fixed,sec:binary-quasimetric}.
Puzzled by this apparent discrepancy in alphabet sizes, the author investigated the symmetric binary-alphabet case and proved the upper bound of \cref{sec:binary-symmetric}.

The author wrote a preliminary draft of this paper in plain English, omitting lower-level proof details. An LLM expanded these notes into the first full draft, which the author subsequently edited extensively. The author takes full responsibility for the contents of this manuscript.

\begingroup
\small
\raggedright
\bibliography{refs}

@misc{DKK26,
  author = {Das, Debarati and Kipouridis, Evangelos and Kociumaka, Tomasz},
  title = {{Metric Weighted Edit Distance: $(3+\varepsilon)$-Approximation in $\widetilde{\mathcal{O}}_{\varepsilon}(N^{1.6})$ Time}},
  year = {2026},
  month = sep,
  eprint = {2609.20796v1}
}

@misc{MTW26,
  author = {Mader, Ethan and Tavasoli, Borna and Wang, Jihan},
  title = {{A Strongly Subquadratic $(3+\varepsilon)$-Approximation for Weighted Edit Distance over Arbitrary Metrics}},
  year = {2026},
  month = sep,
  eprint = {2609.14873v1}
}

@inproceedings{GRS21,
  author = {Goldenberg, Elazar and Rubinstein, Aviad and Saha, Barna},
  title = {{Does Preprocessing Help in Fast Sequence Comparisons?}},
  booktitle = {Proceedings of STOC},
  pages = {657--670},
  year = {2020},
  doi = {10.1145/3357713.3384300},
  eprint = {2108.09115}
}

@inproceedings{MR26,
  author = {Mao, Xiao and Rubinstein, Aviad},
  title = {{Approximation Schemes for Edit Distance and LCS in Quasi-Strongly Subquadratic Time}},
  booktitle = {Proceedings of STOC},
  pages = {744--754},
  year = {2026},
  doi = {10.1145/3798129.3800789}
}

@misc{Andoni20,
  author = {Andoni, Alexandr},
  title = {{Simple Constant-Factor Approximation to Edit Distance}},
  year = {2020},
  month = aug,
  url = {https://www.cs.columbia.edu/~andoni/papers/edit/simple.pdf}
}

@misc{Andoni18,
  author = {Andoni, Alexandr},
  title = {{Simpler Constant-Factor Approximation to Edit Distance Problems}},
  year = {2018},
  month = oct,
  url = {https://www.cs.columbia.edu/~andoni/papers/edit/edit3.pdf}
}

@article{SMAWK,
  author = {Aggarwal, Alok and Klawe, Maria M. and Moran, Shlomo and Shor, Peter and Wilber, Robert},
  title = {Geometric applications of a matrix-searching algorithm},
  journal = {Algorithmica},
  volume = {2},
  number = {1},
  pages = {195--208},
  year = {1987},
  doi = {10.1007/BF01840359}
}

@article{HLLW,
  author = {Hermelin, Danny and Landau, Gad M. and Landau, Shir and Weimann, Oren},
  title = {Unified compression-based acceleration of edit-distance computation},
  journal = {Algorithmica},
  volume = {65},
  number = {2},
  pages = {339--353},
  year = {2013},
  doi = {10.1007/s00453-011-9590-6}
}

@misc{Tiskin13,
  author = {Tiskin, Alexander},
  title = {{Semi-local string comparison: Algorithmic techniques and applications}},
  year = {2013},
  month = nov,
  doi = {10.1007/s11786-007-0033-3},
  eprint = {0707.3619}
}

@inproceedings{CDGKS,
  author = {Chakraborty, Diptarka and Das, Debarati and Goldenberg, Elazar and Kouck{\'y}, Michal and Saks, Michael E.},
  title = {Approximating edit distance within constant factor in truly sub-quadratic time},
  booktitle = {Proceedings of FOCS},
  pages = {979--990},
  year = {2018},
  url = {https://iuuk.mff.cuni.cz/~koucky/papers/approxEdit-v1.pdf}
}

@article{WF,
  author = {Wagner, Robert A. and Fischer, Michael J.},
  title = {The string-to-string correction problem},
  journal = {Journal of the ACM},
  volume = {21},
  number = {1},
  pages = {168--173},
  year = {1974},
  doi = {10.1145/321796.321811}
}

@article{Ukk85,
  author = {Ukkonen, Esko},
  title = {Algorithms for approximate string matching},
  journal = {Information and Control},
  volume = {64},
  number = {1--3},
  pages = {100--118},
  year = {1985},
  doi = {10.1016/S0019-9958(85)80046-2}
}

@article{LV88,
  author = {Landau, Gad M. and Vishkin, Uzi},
  title = {Fast string matching with $k$ differences},
  journal = {Journal of Computer and System Sciences},
  volume = {37},
  number = {1},
  pages = {63--78},
  year = {1988},
  doi = {10.1016/0022-0000(88)90045-1}
}

@article{MP80,
  author = {Masek, William J. and Paterson, Michael S.},
  title = {A faster algorithm computing string edit distances},
  journal = {Journal of Computer and System Sciences},
  volume = {20},
  number = {1},
  pages = {18--31},
  year = {1980},
  doi = {10.1016/0022-0000(80)90002-1}
}

@inproceedings{BI15,
  author = {Backurs, Arturs and Indyk, Piotr},
  title = {{Edit Distance Cannot Be Computed in Strongly Subquadratic Time (unless SETH is false)}},
  booktitle = {Proceedings of STOC},
  pages = {51--58},
  year = {2015},
  doi = {10.1145/2746539.2746612},
  eprint = {1412.0348}
}

@inproceedings{AKO10,
  author = {Andoni, Alexandr and Krauthgamer, Robert and Onak, Krzysztof},
  title = {Polylogarithmic approximation for edit distance and the asymmetric query complexity},
  booktitle = {Proceedings of FOCS},
  pages = {377--386},
  year = {2010},
  doi = {10.1109/FOCS.2010.43}
}

@inproceedings{AN20,
  author = {Andoni, Alexandr and Nosatzki, Negev Shekel},
  title = {Edit distance in near-linear time: it's a constant factor},
  booktitle = {Proceedings of FOCS},
  pages = {990--1001},
  year = {2020},
  doi = {10.1109/FOCS46700.2020.00096}
}

@misc{FMQTW26,
  author = {Fox, Emily and Mader, Ethan and Quanrud, Kent and Tavasoli, Borna and Wang, Jihan},
  title = {{A Faster 3-Approximation for Sublinear Edit Distances}},
  year = {2026},
  month = jun,
  url = {https://ethan-mader.com/assets/files/sublinear_edit_distance.pdf}
}

@inproceedings{Kusz19,
  author = {Kuszmaul, William},
  title = {{Dynamic Time Warping in Strongly Subquadratic Time: Algorithms for the Low-Distance Regime and Approximate Evaluation}},
  booktitle = {46th International Colloquium on Automata, Languages, and Programming (ICALP 2019)},
  series = {Leibniz International Proceedings in Informatics},
  volume = {132},
  pages = {80:1--80:15},
  year = {2019},
  doi = {10.4230/LIPIcs.ICALP.2019.80},
  eprint = {1904.09690}
}

@article{Huang88,
  author = {Huang, Xiaoqiu},
  title = {A lower bound for the edit-distance problem under an arbitrary cost function},
  journal = {Information Processing Letters},
  volume = {27},
  number = {6},
  pages = {319--321},
  year = {1988},
  doi = {10.1016/0020-0190(88)90220-7}
}

@inproceedings{CKW23,
  author = {Cassis, Alejandro and Kociumaka, Tomasz and Wellnitz, Philip},
  title = {{Optimal Algorithms for Bounded Weighted Edit Distance}},
  booktitle = {Proceedings of FOCS},
  pages = {2177--2187},
  year = {2023},
  doi = {10.1109/FOCS57990.2023.00135}
}

@inproceedings{KW19,
  author = {Kane, Daniel M. and Williams, Richard Ryan},
  title = {The Orthogonal Vectors Conjecture for Branching Programs and Formulas},
  booktitle = {Proceedings of ITCS},
  series = {Leibniz International Proceedings in Informatics},
  volume = {124},
  pages = {48:1--48:15},
  year = {2019},
  doi = {10.4230/LIPIcs.ITCS.2019.48}
}

@misc{BGK25LCS,
  author = {Boneh, Itai and Golan, Shay and Kraus, Matan},
  title = {Deterministic Longest Common Subsequence Approximation in Near-Linear Time},
  year = {2025},
  eprint = {2507.22486}
}

@article{CLZ03,
  author = {Crochemore, Maxime and Landau, Gad M. and Ziv-Ukelson, Michal},
  title = {A Subquadratic Sequence Alignment Algorithm for Unrestricted Scoring Matrices},
  journal = {SIAM Journal on Computing},
  volume = {32},
  number = {6},
  pages = {1654--1673},
  year = {2003},
  doi = {10.1137/S0097539702402007}
}

@article{Schmidt98,
  author = {Schmidt, Jeanette P.},
  title = {All Highest Scoring Paths in Weighted Grid Graphs and Their Application to Finding All Approximate Repeats in Strings},
  journal = {SIAM Journal on Computing},
  volume = {27},
  number = {4},
  pages = {972--992},
  year = {1998},
  doi = {10.1137/S0097539795288489}
}

@inproceedings{DKMS22,
  author = {Day, Joel D. and Kosche, Maria and Manea, Florin and Schmid, Markus L.},
  title = {Subsequences with Gap Constraints: Complexity Bounds for Matching and Analysis Problems},
  booktitle = {Proceedings of ISAAC},
  series = {Leibniz International Proceedings in Informatics},
  volume = {248},
  pages = {64:1--64:18},
  year = {2022},
  doi = {10.4230/LIPIcs.ISAAC.2022.64},
  eprint = {2206.13896}
}

@inproceedings{BI16,
  author = {Backurs, Arturs and Indyk, Piotr},
  title = {{Which Regular Expression Patterns Are Hard to Match?}},
  booktitle = {Proceedings of FOCS},
  pages = {457--466},
  year = {2016},
  doi = {10.1109/FOCS.2016.56},
  eprint = {1511.07070}
}
\endgroup

\clearpage

\end{document}